\providecommand{\tabularnewline}{\\}
\theoremstyle{plain}
\newtheorem{prop}{\protect\propositionname}
\newtheorem{result}{Result}
\providecommand{\propositionname}{Proposition}
\renewcommand{\hat}{\widehat}
\date{January 25, 2026}
\providecommand{\propositionname}{Proposition}
\begin{document}
\title{Best Feasible Conditional Critical Values for a More Powerful Subvector
Anderson-Rubin Test}
\author{Jesse Hoekstra\thanks{jesse.hoekstra@nuffield.ox.ac.uk} ~and Frank
Windmeijer\thanks{frank.windmeijer@stats.ox.ac.uk}\\
 {\small Dept of Statistics and Nuffield College, University of Oxford,
UK}\\
}
\maketitle
\begin{abstract}
\noindent\baselineskip=15pt For subvector inference in the linear
instrumental variables model under homoskedasticity but allowing for
weak instruments, \citet*{GuggenbergerKleiMavrQE2019} (GKM) propose
a conditional subvector \citet{AndersonRubin1949} ($AR$) test that
uses data-dependent critical values that adapt to the strength of
the parameters not under test. This test has correct size and strictly
higher power than the test that uses standard asymptotic chi-square
critical values. The subvector $AR$ test is the minimum eigenvalue of
a data dependent matrix. The GKM critical value function conditions
on the largest eigenvalue of this matrix. We consider instead the data
dependent critical value function conditioning on the second-smallest
eigenvalue, as this eigenvalue is the appropriate indicator for weak
identification. We find that the data dependent critical value function
of GKM also applies to this conditioning and show that this test has correct
size and power strictly higher than the GKM test when the number of parameters not under test is larger than one. Our proposed procedure further applies to the subvector $AR$ test statistic that is robust to an approximate kronecker product structure of conditional heteroskedasticity as proposed by \citet*{Guggenberger_Kleibergen_Mavroeidis_2024}, carrying over its power advantage to this setting as well. 
\end{abstract}
{\small\textbf{JEL Classification:}}{\small{} C12, C26.}{\small\par}

\noindent{\small\textbf{Keywords:}}{\small{} Instrumental Variables,
Weak-Instrument Robust Inference, Subvector $AR$ Test, LIML, Overidentification
Test, Power.}{\small\par}

\thispagestyle{empty}

\baselineskip=20pt

\pagebreak{}

\setcounter{page}{1}

\section{Introduction}

For the linear instrumental variables model under homoskedasticity,
\citet*{GuggenbergerKleiMavrQE2019} (GKM) introduced a subvector
$AR$ testing procedure that uses data-dependent critical values that
adapt to the identification strength of the parameters not under test.
This test has correct size and strictly higher power than the test
that uses standard asymptotics chi-square critical values. The subvector
$AR$ test is the minimum eigenvalue of a data dependent matrix and
the GKM critical-value function conditions on the largest eigenvalue
of this non-central Wishart distributed matrix. If this largest eigenvalue
is small then the nuisance parameters not under test suffer from a
weak-instruments problem. The $AR$ test based on the chi-square critical
values is then undersized and the conditional approach of GKM decreases
the critical value. It does this maintaining size control, and thus
improves power compared to the standard approach.

Whilst a small largest eigenvalue does indicate identification problems,
a large largest eigenvalue does not necessarily imply strong identification.
For example, the standard weak-instruments test for the two-stage
least squares estimator is the \citet{CraggDon1993} test. This is
a rank test on the first-stage parameters, which is the \textit{minimum}
eigenvalue of the first-stage concentration matrix, see also \citet{StockYogo2005}.
Under the maintained assumption of instrument validity, the equivalent
in the LIML based $AR$ subvector test is the second-smallest eigenvalue.
A small second-smallest eigenvalue then indicates weak-instrument
problems for the estimation of the parameters not under test, affecting
the size and power of the test.

We therefore investigate the use of data-dependent critical values
that condition not on the largest but on the second-smallest eigenvalue.
Our main finding is that we can use the \textit{same} critical-value
function as derived by GKM. We obtain this result by first showing
that, for $p>2$, if the $p-2$ largest eigenvalues of the $p\times p$
non-central Wishart distributed matrix are all large and increasing
to $\infty$, then the joint distribution of the two smallest eigenvalues
is the same as that of a $2\times2$ Wishart matrix, but with a change
in degrees of freedom, which is the difference between the number
of instruments and the number of parameters not under test. The two
eigenvalues of the associated non-centrality matrix are the same as
the two smallest ones for the $p\times p$ matrix. Therefore, in this
large $p-2$ largest-eigenvalues setting, the results and tabulated
critical values of GKM for different degrees of freedom apply directly,
simply substituting the largest for the second-smallest eigenvalue
as the conditioning variable. This implies that size is controlled
in this case. Our second contribution is then to show that when the
$p-2$ largest eigenvalues decrease in value to be less than $\infty$,
the rejection probability of the test gets smaller, hence the test
will have size smaller or equal to nominal size under the null in
all circumstances. This second part is shown by direct calculation
of the joint distribution of the two smallest eigenvalues, based on
the results of \citet{james1964distributions} and \citet{muirhead2009aspects},
from which the conditional distribution can be obtained. Simulations,
drawing from the non-central Wishart distribution, give the same results.
As the direct calculations are very time consuming, we confirm the
second finding further with a range of simulations.

For $p>2$, as the critical value when conditioning on the second
smallest eigenvalue is smaller than that conditioning on the largest
eigenvalue, this $AR$ subvector testing procedure has strictly higher
power than the GKM test.

The distribution of the subset $AR$ test
statistic is of course a function of all $p-1$ largest eigenvalues
of the non-central Wishart distributed matrix, 
and other approaches that consider a larger subset of these eigenvalues could increase power, but likely at the expense of much higher computational cost (GKM, pp 496-497). Our method retains one of the main advantages of the GKM test, namely its computational simplicity. The second-smallest eigenvalue captures the information regarding weak identification  better than the largest eigenvalue. We therefore consider our approach to be the best feasible subset $AR$ test in this linear IV setting -- it maintains computational simplicity while properly incorporating weak identification of the parameters not under test.

In the next section, Section \ref{sec:finite_sample_analysis}, we first present the finite sample model setup and analysis of GKM and present their conditioning on the largest eigenvalue $AR$ testing procedure in Section \ref{sec:CondLarge}. In Section \ref{sec:Cond2nd} we present our main results for the $AR$ testing procedure conditioning on the second-smallest eigenvalue. Section \ref{sec:Sims} presents some simulation results, confirming the correct size and superior power properties of our proposed test. \cite*{Guggenberger_Kleibergen_Mavroeidis_2024} show for a model with heteroskedasticity of a kronecker form that their conditional approach applies to a robust version of the $AR$ test statistic. In Appendix \ref{sec:Het} we show that these results also extend to our conditional approach. Section \ref{seq:Conclusions} concludes.

\section{Finite Sample Analysis}
\label{sec:finite_sample_analysis}

We follow the finite sample analysis of GKM and consider the linear
model with endogenous variables $X$ and $W$, given by the equations
\begin{align}
y & =X\beta+W\gamma+\varepsilon,\label{eq:fullmod}\\
X & =Z\Pi_{X}+V_{X},\nonumber \\
W & =Z\Pi_{W}+V_{W},\nonumber 
\end{align}
where $y\in\mathbb{R}{}^{n}$, $X\in\mathbb{R}{}^{n\times m_{X}}$,
$W\in\mathbb{R}{}^{n\times m_{W}}$, the instruments $Z\in\mathbb{R}{}^{n\times k}$
and $k-m_{W}\geq1$. As in GKM, we assume that the instruments $Z$
are fixed, $Z^{\prime}Z$ is positive definite and that $\left(\varepsilon_{i},V_{Xi}^{\prime},V_{Wi}^{\prime}\right)^{\prime}\sim i.i.d.~\mathcal{N}\left(0,\Sigma\right)$,
$i=1,\ldots,n$, for some $\Sigma\in\mathbb{R}{}^{(m+1)\times(m+1)}$,
with $m=m_{X}+m_{W}$, and
\[
\Sigma=\left(\begin{array}{ccc}
\sigma_{\varepsilon\varepsilon} & \Sigma_{\varepsilon V_{X}} & \Sigma_{\varepsilon V_{W}}\\
\Sigma_{\varepsilon V_{X}}^{\prime} & \Sigma_{V_{X}V_{X}} & \Sigma_{V_{X}V_{W}}\\
\Sigma_{\varepsilon V_{W}}^{\prime} & \Sigma_{V_{X}V_{W}}^{\prime} & \Sigma_{V_{W}V_{W}}
\end{array}\right).
\]

For testing the hypothesis
\[
H_{0}:\beta=\beta_{0}\,\,\,\text{against}\,\,\,H_{1}:\beta\neq\beta_{0},
\]
let
\[
y_{0}:=y-X\beta_{0},
\]
and consider the restricted model given by the equations 
\begin{align}
y_{0} & =W\gamma+\varepsilon\left(\beta_{0}\right),\label{eq:restrmod}\\
W & =Z\Pi_{W}+V_{W},\nonumber 
\end{align}
with $\varepsilon\left(\beta_{0}\right)=\varepsilon+X\left(\beta-\beta_{0}\right)$.

The reduced form for $y_{0}$ is given by
\[
y_{0}=Z\pi_{y_{0}}+v_{y_{0}},
\]
with $\pi_{y_{0}}=\Pi_{X}\left(\beta-\beta_{0}\right)+\Pi_{W}\gamma$
and $v_{y_{0}}=\varepsilon+V_{X}\left(\beta-\beta_{0}\right)+V_{W}\gamma$.
It follows that $\left(v_{y_{0}i},V_{Wi}^{\prime}\right)\sim i.i.d.\,\mathcal{N}\left(0,\Omega\left(\beta_{0}\right)\right)$,
with
\begin{align*}
\Omega\left(\beta_{0}\right) & =\left(\begin{array}{cc}
1 & 0\\
\beta-\beta_{0} & 0\\
\gamma & I_{m_{W}}
\end{array}\right)^{\prime}\Sigma\left(\begin{array}{cc}
1 & 0\\
\beta-\beta_{0} & 0\\
\gamma & I_{m_{W}}
\end{array}\right),
\end{align*}
where we assume, as in GKM, that $\Omega\left(\beta_{0}\right)$ is
known and positive definite.

The subvector \citet{AndersonRubin1949} test statistic is then defined
as 
\begin{align*}
AR_{n}\left(\beta_{0}\right) & \coloneqq \min_{\tilde{\gamma}\in\mathbb{R}^{m_{W}}}\frac{\left(y_{0}-W\tilde{\gamma}\right)^{\prime}P_{Z}\left(y_{0}-W\tilde{\gamma}\right)}{\left(1,-\tilde{\gamma}^{\prime}\right)\Omega\left(\beta_{0}\right)\left(1,-\tilde{\gamma}^{\prime}\right)^{\prime}}\\
 & =\min\,\text{eval}\left(\left(y_{0},W\right)^{\prime}P_{Z}\left(y_{0},W\right)\left(\Omega\left(\beta_{0}\right)\right)^{-1}\right),
\end{align*}
where $P_Z=Z(Z'Z)^{-1}Z'$ and $\min\,\text{eval}\left(A\right)$ denotes the minimum eigenvalue
of $A$. The resulting estimator for $\tilde{\gamma}$ is the so-called
LIMLK estimator of $\gamma$ in the restricted model (\ref{eq:restrmod}),
an infeasible estimator which requires the reduced form error covariance
matrix to be known, see \citet{Anderson1977} and the discussion in
\citet[p 564]{StaigerStockEcta1997}.\footnote{The feasible statistic as evaluated later in Section \ref{sec:Sims} is given
by $AR_{n,f}\left(\beta_{0}\right)=\frac{\left(y_{0}-W\hat{\tilde{\gamma}}_{L}\right)^{\prime}P_{Z}\left(y_{0}-W\hat{\tilde{\gamma}}_{L}\right)}{\left(y_{0}-W\hat{\tilde{\gamma}}_{L}\right)^{\prime}M_{Z}\left(y_{0}-W\hat{\tilde{\gamma}}_{L}\right)/\left(n-k\right)}$,
where $\hat{\tilde{\gamma}}_{L}$ is the LIML estimator of $\gamma$
in the restricted model (\ref{eq:restrmod}) and $M_Z=I_p-P_Z$}

Denote the $p:=m_{W}+1$ eigenvalues of the matrix $\left(y_{0},W\right)^{\prime}P_{Z}\left(y_{0},W\right)\left(\Omega\left(\beta_{0}\right)\right)^{-1}$
by $\hat{\kappa}_{1}\geq\hat{\kappa}_{2}\geq\ldots\geq\hat{\kappa}_{p}$,
so that $AR_{n}\left(\beta_{0}\right)=\hat{\kappa}_{p}$. GKM show
that the eigenvalues $\hat{\kappa}_{j}$ for $j=1,\ldots,p$ are equal
to the eigenvalues of a $p\times p$ non-central Wishart matrix $\Xi^{'}\Xi$,
with the $k$ rows of the $k\times p$ matrix $\Xi$ are independently
normally distributed with common covariance matrix $I_{p}$ and $\mathbb{E}\left(\Xi\right)=\mathcal{M}$.
Therefore, $\Xi^{'}\Xi\sim\mathcal{W}_{p}\left(k,I_{p},\mathcal{M}^{\prime}\mathcal{M}\right)$,
a non-central Wishart distribution with $k$ degrees of freedom, covariance
matrix $I_{p}$ and noncentrality matrix $\mathcal{M}^{\prime}\mathcal{M}$.

As GKM show, under the null $H_{0}:\beta=\beta_{0}$, $\mathcal{M}=\left(0,\Theta_{W}\right)$,
where the $k\times m_{W}$ matrix $\Theta_{W}$ is given by
\[
\Theta_{W}:=\left(Z^{\prime}Z\right)^{1/2}\Pi_{W}\Sigma_{V_{W}V_{W}.\varepsilon}^{-1/2},
\]
where 
\[
\Sigma_{V_{W}V_{W}.\varepsilon}=\Sigma_{V_{W}V_{W}}-\Sigma_{\varepsilon V_{W}}^{\prime}\Sigma_{\varepsilon V_{W}}\sigma_{\varepsilon\varepsilon}^{-1}.
\]
It follows that then
\[
\mathcal{M}^{\prime}\mathcal{M}=\left(\begin{array}{cc}
0 & 0\\
0 & \Theta_{W}^{\prime}\Theta_{W}
\end{array}\right),
\]
with
\[
\Theta_{W}^{\prime}\Theta_{W}=\Sigma_{V_{W}V_{W}.\varepsilon}^{-1/2}\Pi_{W}^{\prime}Z^{\prime}Z\Pi_{W}\Sigma_{V_{W}V_{W}.\varepsilon}^{-1/2}.
\]
The joint distribution of $\left(\hat{\kappa}_{1},\ldots,\hat{\kappa}_{p}\right)$
under the null then only depends on the eigenvalues of $\Theta_{W}^{\prime}\Theta_{W}$,
which GKM denote by
\[
\kappa_{j}:=\kappa_{j}\left(\Theta_{W}^{\prime}\Theta_{W}\right),\,\,\,\,j=1,\ldots,m_{W}.
\]

\subsection{Concentration Parameter}
It is at this point interesting to compare the $\Theta_{W}^{\prime}\Theta_{W}$
concentration matrix to the standard concentration matrix which is
defined using the first-stage model for $W$ and given
by $\Sigma_{V_{W}V_{W}}^{-1/2}\Pi_{W}^{\prime}Z^{\prime}Z\Pi_{W}\Sigma_{V_{W}V_{W}}^{-1/2}$,
see \citet*{StockWrightYogoJBES2002} and \citet{StockYogo2005}.
For the $m_{W}=1$ case, and writing the variance matrix under the
null in the restricted model (\ref{eq:restrmod}) as $\Sigma_{r}=\left(\begin{array}{cc}
\sigma_{\varepsilon}^{2} & \sigma_{\varepsilon v_{w}}\\
\sigma_{\varepsilon v_{w}} & \sigma_{v_{w}}^{2}
\end{array}\right)$ and the correlation coefficient $\rho_{\varepsilon v_{w}}=\frac{\sigma_{\varepsilon v_{w}}}{\sigma_{\varepsilon}\sigma_{v_{w}}}$,
the standard concentration parameter is given by
\[
\mu_{w}^{2}=\frac{\pi_{w}^{\prime}Z^{\prime}Z\pi_{w}}{\sigma_{v_{w}}^{2}},
\]
whereas
\[
\theta_{w}^{\prime}\theta_{w}=\frac{\pi_{w}^{\prime}Z^{\prime}Z\pi_{w}}{\sigma_{v_{w}}^{2}-\frac{\sigma_{\epsilon v_{w}}^{2}}{\sigma_{\varepsilon}^{2}}}=\frac{\pi_{w}^{\prime}Z^{\prime}Z\pi_{w}}{\sigma_{v_{w}}^{2}\left(1-\rho_{\varepsilon v_{w}}^{2}\right)}.
\]
$\mu_{w}^{2}$ is the noncentrality parameter for the Wald test testing
$H_{0}:\pi_{w}=0$ based on the OLS estimator $\hat{\pi}_{w}=\left(Z^{\prime}Z\right)^{-1}Z^{\prime}w$,
whereas $\theta_{w}^{\prime}\theta_{w}$ is the noncentrality parameter
for the Wald test based on, here, the infeasible LIMLK estimator of
$\pi_{w}$ in the restricted model (\ref{eq:restrmod}). As this is
essentially part of a FIML type system estimator, the structural errors
enter this expression.

Under the weak instrument, local-to-zero setting $\pi_{w}=\pi_{w,n}=c/\sqrt{n}$,
with $c$ a vector of constants, it follows that $\theta_{w}^{\prime}\theta_{w}=$$\frac{c^{\prime}\left(Z^{\prime}Z/n\right)c}{\sigma_{v_{w}}^{2}\left(1-\rho_{\varepsilon v_{w}}^{2}\right)}$,
and so, assuming that $c^{\prime}\left(Z^{\prime}Z/n\right)c<a$,
$\forall n$, for a finite $a\in\mathbb{R}^{+}$, under weak instruments
this noncentrality/concentration parameter $\theta_{w}^{\prime}\theta_{w}=\kappa_{1}\rightarrow\infty$
when $\rho_{\varepsilon v_{w}}^{2}\rightarrow1$, and hence in that
case the weak-instrument distribution of the $AR_{n}\left(\beta_{0}\right)$
statistic under the null is the standard strong-instrument $\chi_{k-1}^{2}$
distribution, following the results of Theorem 1 in GKM, see also
\citet{VandeSijpe2023}.

\subsection{Conditioning on Largest Eigenvalue}
\label{sec:CondLarge}

\subsubsection{$m_{W}=1$}
\label{sec:Cond21}

For the $m_{W}=1$ case, GKM show that an approximate conditional
density of the smallest eigenvalue, $\hat{\kappa}_{2}$, given the
largest, $\hat{\kappa}_{1}$, is given by
\begin{equation}
    \label{eq:cpdf}
    f_{\hat{\kappa}_{2}|\hat{\kappa}_{1}}^{*}\left(x_{2}|\hat{\kappa}_{1}\right)=f_{\chi_{k-1}^{2}}\left(x_{2}\right)\left(\hat{\kappa}_{1}-x_{2}\right)^{1/2}g\left(\hat{\kappa}_{1}\right),\,\,\,x_{2}\in\left[0,\hat{\kappa}_{1}\right],
\end{equation}
where $f_{\chi_{k-1}^{2}}\left(.\right)$ is the density of a $\chi_{k-1}^{2}$
and $g\left(\hat{\kappa}_{1}\right)$ is a function that does not
depend on any unknown parameters. Using this conditional density function
GKM derive and tabulate data dependent, conditional on $\hat{\kappa}_{1}$,
critical values $c_{1-\alpha}\left(\hat{\kappa}_{1},k-1\right)$ for
the subvector $AR_{n}\left(\beta_{0}\right)$ statistic. As here $AR_{n}\left(\beta_{0}\right)=\hat{\kappa}_{2}$,
the test function for rejecting $H_{0}$ at nominal size $\alpha$
is written as
\begin{equation}
\label{eq:phic}
\phi_{c}\left(\hat{\kappa}_{1,2}\right):=\mathbf{1}\left[\hat{\kappa}_{2}>c_{1-\alpha}\left(\hat{\kappa}_{1},k-1\right)\right],
\end{equation}
where \textbf{$\mathbf{1}\left[.\right]$} is the indicator function
and, for general $s,r$, $\hat{\kappa}_{s,r}:=\left\{ \hat{\kappa}_{s},\hat{\kappa}_{r}\right\} $.

GKM (Theorem 2, p 496)  show by numerical integration and Monte Carlo simulation that
the conditional critical values $c_{1-\alpha}\left(\hat{\kappa}_{1},k-1\right)$
guarantee size control for the subvector $AR$ test defined in (\ref{eq:phic}). This result holds for the critical value functions that are tabulated in GKM, for $\alpha\in\{0.01,0.05,0.10\}$ and $k-m_W\in\{1,\ldots,20\}$ for a grid of values for $\hat\kappa_1$, with the conditional critical value for any particular observed value of $\hat\kappa_{1}$ obtained by linear interpolation.

As $c_{1-\alpha}\left(\hat{\kappa}_{1},k-1\right)<c_{1-\alpha}\left(\infty,k-1\right)$, $\phi_{c}$ has strictly higher power than the unconditional approach, the subvector $AR$ test $\phi_{\chi^{2}}$ that uses the $1-\alpha$ critical values of the $\chi_{k-1}^{2}$
distribution, the latter discussed in \citet*{GugKleiMavrChenEcta2012}.

\subsubsection{$m_{W}>1$}

The GKM approach for $m_{W}>1$ is to use the test function
\[
\phi_{c_{1}}\left(\hat{\kappa}_{1,p}\right):=\mathbf{1}\left[\hat{\kappa}_{p}>c_{1-\alpha}\left(\hat{\kappa}_{1},k-m_{W}\right)\right],
\]
where the subscript in $c_{1}$ makes it clear that conditioning is on
the largest eigenvalue. GKM (Corollary 4, p 500)  show that $\phi_{c_{1}}$ has correct
size for the tabulated critical value functions as mentioned in Section \ref{sec:Cond21}, with $k-m_W\in\{1,\ldots,20\}$, and has power strictly higher than $\phi_{\chi^{2}}$ that uses the critical values of the $\chi_{k-m_{W}}^{2}$ distribution.

\subsection{Conditioning on Second-Smallest Eigenvalue}
\label{sec:Cond2nd}

Instead of the conditioning on the largest eigenvalue, we propose
conditioning on the second-smallest eigenvalue, $\hat{\kappa}_{m_{W}}$,
with the test function given by
\[
\phi_{c_{p-1}}\left(\hat{\kappa}_{p-1,p}\right):=\mathbf{1}\left[\hat{\kappa}_{p}>c_{1-\alpha}\left(\hat{\kappa}_{p-1},k-m_{W}\right)\right].
\]
The reason for considering conditioning on the second-smallest eigenvalue
is that the smallest eigenvalue of the concentration matrix $\Theta_{W}^{\prime}\Theta_{W}$
is a better indicator of the strength of the identification of $\gamma$
than the largest eigenvalue. This is also why the standard \citet{CraggDon1993}
rank test for underidentification is based on the minimum eigenvalue of
the concentration matrix $\hat{\Sigma}_{V_{W}V_{W}}^{-1/2}\hat{\Pi}_{W}^{\prime}Z^{\prime}Z\hat{\Pi}_{W}\hat{\Sigma}_{V_{W}V_{W}}^{-1/2}$,
with $\hat{\Pi}_{W}=\left(Z^{\prime}Z\right)^{-1}Z^{\prime}W$ and
$\hat{\Sigma}_{V_{W}V_{W}}=W^{\prime}M_{Z}W/\left(n-k\right)$, see
also \citet{StockYogo2005}.

For example, for $m_{W}=2$, we can have the situation that $\kappa_{1}=\infty$,
but $\kappa_{2}$ is small, in which case $\gamma$ is weakly identified.
Conditioning on $\hat{\kappa}_{1}$ would then result in conditional
critical values close to those of the $\chi_{k-m_{W}}^{2}$, whereas
conditioning on $\hat{\kappa}_{2}$ would properly address the weak
identification problem.

\sloppy As $\kappa_{1}\geq\kappa_{2}\geq...\geq\kappa_{p}$, we have
that if $\kappa_{p-2}=\infty$, it follows that also $\kappa_{j}=\infty$,
$j=1,\ldots,p-3$. For $\Xi^{\prime}\Xi\sim\mathcal{W}_{p}\left(k,I_{p},\mathcal{M}^{\prime}\mathcal{M}\right)$,
our main finding is that the joint distribution of $\left(\hat{\kappa}_{p-1},\hat{\kappa}_{p}\right)$
when the eigenvalues of $\mathcal{M}^{\prime}\mathcal{M}$ are given
by $\left\{ \kappa_{1}=\ldots=\kappa_{p-2}=\infty,\kappa_{p-1},\kappa_{p}=0\right\} $
is the same as that of $\left(\hat{\kappa}_{1}^{*},\hat{\kappa}_{2}^{*}\right)$,
which are the eigenvalues of $\Xi^{*\prime}\Xi^{*}\sim\mathcal{W}_{2}\left(k-m_{W}+1,I_{2},\mathcal{M^{*}}^{\prime}\mathcal{M}^{*}\right)$,
with the two eigenvalues of $\mathcal{M^{*}}^{\prime}\mathcal{M}^{*}$
equal to $\kappa_{1}^{*}=\kappa_{p-1}$ and $\kappa_{2}^{*}=0$. It
therefore follows that
\[
f_{\hat{\kappa}_{p}|\hat{k}_{p-1};\kappa_{p-2}=\infty}^{*}\left(x_{p}|\hat{\kappa}_{p-1};\kappa_{p-2}=\infty\right)=f_{\chi_{k-m_{W}}^{2}}\left(x_{p}\right)\left(\hat{\kappa}_{p}-x_{p}\right)^{1/2}g\left(\hat{\kappa}_{p-1}\right),\,\,\,x_{p}\in\left[0,\hat{\kappa}_{p-1}\right],
\]
and so the conditional critical values of GKM apply directly to this
conditioning, only adjusting the degrees of freedom commensurate with
$m_{W}$. We state this result formally in the following Proposition.
\begin{prop}
\label{Prop infty} Let $\Xi^{\prime}\Xi\sim\mathcal{W}_{p}\left(k,I_{p},\mathcal{M}^{\prime}\mathcal{M}\right)$,
with $p=m_{W}+1>2$, and let $\hat{\kappa}_{1}\geq\hat{\kappa}_{2}\geq\ldots\geq\hat{\kappa}_{p}$
denote the ordered eigenvalues of $\Xi^{\prime}\Xi$. The joint distribution
of $\left(\hat{\kappa}_{p-1},\hat{\kappa}_{p}\right)$ when the eigenvalues
of $\mathcal{M}^{\prime}\mathcal{M}$ are given by $\left\{ \kappa_{1}=\ldots=\kappa_{p-2}=\infty,\kappa_{p-1},\kappa_{p}=0\right\} $
is the same as that of $\left(\hat{\kappa}_{1}^{*},\hat{\kappa}_{2}^{*}\right)$,
which are the eigenvalues of $\Xi^{*\prime}\Xi^{*}\sim\mathcal{W}_{2}\left(k-m_{W}+1,I_{2},\mathcal{M^{*}}^{\prime}\mathcal{M}^{*}\right)$,
with the two eigenvalues of $\mathcal{M^{*}}^{\prime}\mathcal{M}^{*}$
equal to $\kappa_{1}^{*}=\kappa_{p-1}$ and $\kappa_{2}^{*}=0$. 
\end{prop}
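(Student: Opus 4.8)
The plan is to read ``$\kappa_{1}=\dots=\kappa_{p-2}=\infty$'' as the limiting regime $\kappa_{j}\to\infty$ for $j=1,\dots,p-2$ (with arbitrary relative rates) and to show that in this regime the joint law of the two smallest eigenvalues $(\hat{\kappa}_{p-1},\hat{\kappa}_{p})$ of $\Xi'\Xi$ converges to the stated $\mathcal{W}_{2}$ eigenvalue law; since the convergence will in fact be almost sure along any realization, convergence in distribution is immediate. First I would use the orthogonal invariance of $\mathcal{W}_{p}(k,I_{p},\cdot)$: the joint distribution of the eigenvalues of $\Xi'\Xi$ depends on $\mathcal{M}'\mathcal{M}$ only through its spectrum, so without loss of generality realize $\Xi=\mathcal{M}+Z$ with $Z$ a $k\times p$ matrix of independent standard normal entries and $\mathcal{M}$ having first $p$ rows equal to $\mathrm{diag}(\kappa_{1}^{1/2},\dots,\kappa_{p}^{1/2})$ and remaining rows zero (this uses $k\geq p$, which holds since $k-m_{W}\geq1$). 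Then I would partition the columns as $\Xi=(\Xi_{A},\Xi_{B})$, with $\Xi_{A}\in\mathbb{R}^{k\times(p-2)}$ carrying the diverging noncentralities and $\Xi_{B}\in\mathbb{R}^{k\times2}$ carrying the last two; here $\mathbb{E}\Xi_{A}=E_{A}\Lambda_{A}$ with $E_{A}=(e_{1},\dots,e_{p-2})$ the first $p-2$ unit vectors of $\mathbb{R}^{k}$ and $\Lambda_{A}=\mathrm{diag}(\kappa_{1}^{1/2},\dots,\kappa_{p-2}^{1/2})$, while $\mathbb{E}\Xi_{B}$ has its only nonzero entry $\kappa_{p-1}^{1/2}$ in position $(p-1,1)$ (using $\kappa_{p}=0$), and $\Xi_{A},\Xi_{B}$ are independent.

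The key step is to pass to $(\Xi'\Xi)^{-1}$ rather than to work with $\Xi'\Xi$ directly: $\Xi'\Xi$ is a.s.\ nonsingular, the two smallest eigenvalues of $\Xi'\Xi$ are the reciprocals of the two largest eigenvalues of $(\Xi'\Xi)^{-1}$, and the latter stay bounded while the other $p-2$ eigenvalues of $(\Xi'\Xi)^{-1}$ go to $0$. By the block-inversion formula the lower-right $2\times2$ block of $(\Xi'\Xi)^{-1}$ equals $S^{-1}$, where $S:=\Xi_{B}'M_{\Xi_{A}}\Xi_{B}$ is the Schur complement and $M_{\Xi_{A}}:=I_{k}-\Xi_{A}(\Xi_{A}'\Xi_{A})^{-1}\Xi_{A}'$, and the off-diagonal and upper-left blocks are built from $(\Xi_{A}'\Xi_{A})^{-1}$ and $(\Xi_{A}'\Xi_{A})^{-1}\Xi_{A}'\Xi_{B}$. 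Writing $\Xi_{A}=G_{A}\Lambda_{A}$ with $G_{A}:=E_{A}+Z_{A}\Lambda_{A}^{-1}$, the factors $\Lambda_{A}$ cancel in $\Xi_{A}(\Xi_{A}'\Xi_{A})^{-1}\Xi_{A}'=G_{A}(G_{A}'G_{A})^{-1}G_{A}'$, and since $\Lambda_{A}^{-1}\to0$ we have $G_{A}\to E_{A}$ with $G_{A}'G_{A}\to I_{p-2}$ nonsingular; hence $\Xi_{A}(\Xi_{A}'\Xi_{A})^{-1}\Xi_{A}'\to E_{A}E_{A}'$, $M_{\Xi_{A}}\to P^{*}:=I_{k}-E_{A}E_{A}'$, $(\Xi_{A}'\Xi_{A})^{-1}\to0$ and $(\Xi_{A}'\Xi_{A})^{-1}\Xi_{A}'\Xi_{B}\to0$, so $S\to S_{\infty}:=\Xi_{B}'P^{*}\Xi_{B}$ and $(\Xi'\Xi)^{-1}\to\mathrm{diag}(0_{(p-2)\times(p-2)},S_{\infty}^{-1})$ almost surely.

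To finish, continuity of eigenvalues gives that $(\hat{\kappa}_{p-1},\hat{\kappa}_{p})$ converges a.s.\ --- hence in distribution --- to the ordered pair of eigenvalues of $S_{\infty}$ (and the remaining $\hat{\kappa}_{1},\dots,\hat{\kappa}_{p-2}\to\infty$, consistent with the reading of the hypothesis). Writing $P^{*}=U_{*}U_{*}'$ with $U_{*}=(e_{p-1},\dots,e_{k})$, we have $S_{\infty}=(U_{*}'\Xi_{B})'(U_{*}'\Xi_{B})=\Xi^{*\prime}\Xi^{*}$ where $\Xi^{*}:=U_{*}'\Xi_{B}$ is the $(k-p+2)\times2$ block consisting of rows $p-1,\dots,k$ of $\Xi_{B}$; its rows are independent $\mathcal{N}(0,I_{2})$ and $\mathbb{E}\Xi^{*}=\mathcal{M}^{*}$ satisfies $\mathcal{M}^{*\prime}\mathcal{M}^{*}=\mathrm{diag}(\kappa_{p-1},0)$. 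Therefore $\Xi^{*\prime}\Xi^{*}\sim\mathcal{W}_{2}(k-p+2,I_{2},\mathrm{diag}(\kappa_{p-1},0))$, and since $k-p+2=k-m_{W}+1$ this is exactly the distribution in the statement, with the two eigenvalues of $\mathcal{M}^{*\prime}\mathcal{M}^{*}$ equal to $\kappa_{1}^{*}=\kappa_{p-1}$ and $\kappa_{2}^{*}=0$.

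The main obstacle is the limiting argument in the middle paragraph --- establishing that precisely two eigenvalues of $\Xi'\Xi$ stay bounded and that they are governed by the Schur complement $\Xi_{B}'M_{\Xi_{A}}\Xi_{B}$ rather than diverging. Carrying the analysis through $(\Xi'\Xi)^{-1}$ is what keeps this clean: it replaces any eigenvalue-interlacing or perturbation estimates by plain continuity of the matrix inverse and of the projection map $\Xi_{A}\mapsto\Xi_{A}(\Xi_{A}'\Xi_{A})^{-1}\Xi_{A}'$ away from rank drops (no rank drop occurs here because $E_{A}$ has full column rank $p-2$, so $G_{A}'G_{A}$ is invertible for large $\kappa_{j}$ along every realization). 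One should also record that $S_{\infty}$ is a.s.\ positive definite, which is where $k\geq p$ --- equivalently $k-m_{W}\geq1$ --- is used a second time.
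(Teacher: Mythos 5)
Your proof is correct and arrives at the paper's conclusion by a genuinely different technical route. The paper, like you, reduces to a canonical form for $\mathcal{M}$ and treats $\kappa_{1}=\dots=\kappa_{p-2}=\infty$ as the limit in which the corresponding singular values diverge, but it stays with $\Xi'\Xi$ throughout: it builds an explicit orthogonal matrix $O$ from the blocks of $\Xi$ carrying the diverging means so that $\tilde{\Xi}=\Xi O$ is block upper-triangular, and then a second orthogonal matrix $\tilde{D}$ that asymptotically block-diagonalizes $\tilde{A}=\tilde{\Xi}'\tilde{\Xi}$; after verifying that $O_{11},O_{22},\tilde{D}_{11},\tilde{D}_{22}$ all tend to identity matrices and the cross-block $\tilde{C}_{12}$ tends to zero in probability, eigenvalue invariance under orthogonal conjugation splits the spectrum into the eigenvalues of $\Xi_{11}'\Xi_{11}\sim\mathcal{W}_{2}\left(k-m_{W}+1,I_{2},\mathrm{diag}\left(0,\kappa_{p-1}\right)\right)$ and $p-2$ diverging roots. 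Your $S_{\infty}=\Xi_{B}'P^{*}\Xi_{B}=(U_{*}'\Xi_{B})'(U_{*}'\Xi_{B})$ is exactly the analogue of their $\Xi_{11}'\Xi_{11}$ (up to relabelling which columns carry the diverging noncentralities), so the two arguments converge on the same limiting object. What your route buys is economy: the single block-inversion identity plus continuity of the inverse and of $\Xi_{A}\mapsto\Xi_{A}(\Xi_{A}'\Xi_{A})^{-1}\Xi_{A}'$ replaces the two-stage orthogonal construction and the bookkeeping of its many converging blocks, and your explicit coupling (fixed Gaussian $Z$, deterministic $\Lambda_{A}\to\infty$) upgrades the paper's convergence in probability to almost-sure convergence along realizations, from which convergence in distribution is immediate. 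What the paper's route buys is that the limiting $2\times2$ Wishart appears directly as a literal sub-block of the original $\Xi$ without ever inverting $\Xi'\Xi$, which keeps the algebra in the same form GKM use elsewhere. Both arguments invoke $k-m_{W}\geq1$ at the same two points: to place the singular values of $\mathcal{M}$ in canonical position and to guarantee the limiting $2\times2$ block is almost surely nonsingular.
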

\begin{proof}
See Appendix \ref{subsec:Proof-of-Proposition}.
\end{proof}
It follows directly from Proposition \ref{Prop infty} that the critical
value functions tabulated by GKM for different values of $k-m_{W}$
apply directly to this setting, simply replacing in their tables the values of $\hat{\kappa}_{1}$ by $\hat{\kappa}_{p-1}$.

In Appendix \ref{subsec:Eigenvalues}, we further show how to calculate
the joint probability density function (pdf) of the ordered eigenvalues
$\hat{\kappa}_{1}\geq\hat{\kappa}_{2}\geq\ldots\geq\hat{\kappa}_{p}$,
from which we can calculate the conditional pdf $f_{\hat{\kappa}_{p}|\hat{\kappa}_{p-1}}\left(x_{p}|\hat{\kappa}_{p-1}\right)$.
We find that the test procedure $\phi_{c_{p-1}}\left(\hat{\kappa}_{p-1,p}\right)$
controls size also for the cases where all or some of the $\kappa_{j}$,
$j=1,\ldots,p-2$ are less then $\infty$, with the main result
as follows.

\begin{result} \label{ResSize} Consider, under the null $H_{0}:\beta=\beta_{0}$
and for given values of $\kappa_{p-1}$ and $\kappa_{p}=0$, the rejection
probability of the subset $AR$ test using the critical values based
on conditioning on the second-smallest eigenvalue. The rejection probability
when $\kappa_{1:(p-2)}<\infty$ is bounded from above by the rejection
probability when $\kappa_{1:(p-2)}=\infty$,

\[
\mathbb{P}\left(AR_{n}\left(\beta_{0}\right)_{\kappa_{1:(p-2)}<\infty,\kappa_{p-1},\kappa_{p}=0}>c_{1-\alpha}\left(\hat{\kappa}_{p-1},k-m_{W}\right)\right)
\]
\begin{equation}
\leq\mathbb{P}\left(AR_{n}\left(\beta_{0}\right)_{\kappa_{1:(p-2)}=\infty,\kappa_{p-1},\kappa_{p}=0}>c_{1-\alpha}\left(\hat{\kappa}_{p-1},k-m_{W}\right)\right)\leq\alpha\label{eq:ARres}. 
\end{equation}
\end{result}

\vspace{10pt}
Result \ref{ResSize} again holds for the critical value functions as tabulated in GKM, as described in Section \ref{sec:CondLarge}. We can see the combination of Proposition \ref{Prop infty} and Result
\ref{ResSize} as an extension of Theorem 1 in GKM, which states for
the $p=2$ case that the distribution of the minimum eigenvalue $\hat{\kappa}_{2}$
is monotonically decreasing in $\kappa_{1}$, and converging to $\chi_{k-1}^{2}$
as $\kappa_{1}\rightarrow\infty$.

To confirm Result \ref{ResSize}, we show in Table \ref{tab:CondCDF}
some calculated points on the conditional cdfs for $p=3$,
\[
\mathbb{P}\left(\hat{\kappa}_{3}\leq a|\hat{\kappa}_{2}=10;\kappa_{1},\kappa_{2}=2,\kappa_{3}=0\right),
\]
for the values $a=\left(4,5,6\right)$, and different values of $\kappa_{1}=\left(2,5,10,\infty\right)$.
It is clear that the conditional cdfs are monotonically decreasing
in $\kappa_{1}$, from which Result \ref{ResSize} follows. 

\begin{table}
\begin{centering}
\begin{tabular}{c|ccc}
\hline 
 & \multicolumn{3}{c}{$a$}\tabularnewline
\hline 
$\kappa_{1}$ & $4$ & $5$ & $6$\tabularnewline
\hline 
$2$ & $0.37$ & $0.56$ & $0.73$\tabularnewline
$5$ & $0.35$ & $0.54$ & $0.72$\tabularnewline
$10$ & $0.34$ & $0.53$ & $0.70$\tabularnewline
$\infty$ & $0.20$ & $0.36$ & $0.54$\tabularnewline
\hline 
\end{tabular}
\par\end{centering}
\caption{\label{tab:CondCDF} Conditional cdf, $\mathbb{P}\left(\hat{\kappa}_{3}\protect\leq a|\hat{\kappa}_{2}=10;\kappa_{1},\kappa_{2}=2,\kappa_{3}=0\right)$,
$p=3$, for different values of $\kappa_{1}$ and $a$.}
\end{table}

\sloppy The numerical calculations of the conditional cdfs are further
confirmed by simulations. We create large samples drawn from the Wishart
distribution, $\Xi^{\prime}\Xi\sim\mathcal{W}_{p}\left(k,I_{p},\mathcal{M}^{\prime}\mathcal{M}\right)$,
with 
\[
\mathcal{M}^{\prime}\mathcal{M}=\left(\begin{array}{ccccc}
0 & 0 &  &  & 0\\
0 & \kappa_{1} & 0\\
 & 0 & \kappa_{2} & 0\\
 &  & 0 & \ddots & 0\\
0 &  &  & 0 & \kappa_{p-1}
\end{array}\right),
\]
and obtain from these samples the conditional cdfs as the observed
frequencies $P\left(\hat{\kappa}_{p}\leq a|\hat{\kappa}_{p-1}=b\pm h\right)$,
for a small bandwidth $h=0.1$.\footnote{The code for these simulations can be found at \url{https://github.com/jesse-hoekstra/subvector\_AR\_HW}.} Figure \ref{fig:Ccdf} shows the results
for the same setup as in Table 1. The monotonicity is again confirmed
across the range $a\in[0,10]$. The right panel further shows that
our calculated conditional cdf points from Table \ref{tab:CondCDF}
align exactly with the simulated ones. In Appendix \ref{subsec:Eigenvalues},
we show simulated conditional cdfs for a range of settings, all showing
the same kind of monotonicity and confirming Result \ref{ResSize}.

\begin{figure}
\begin{centering}
\includegraphics[scale=0.75]{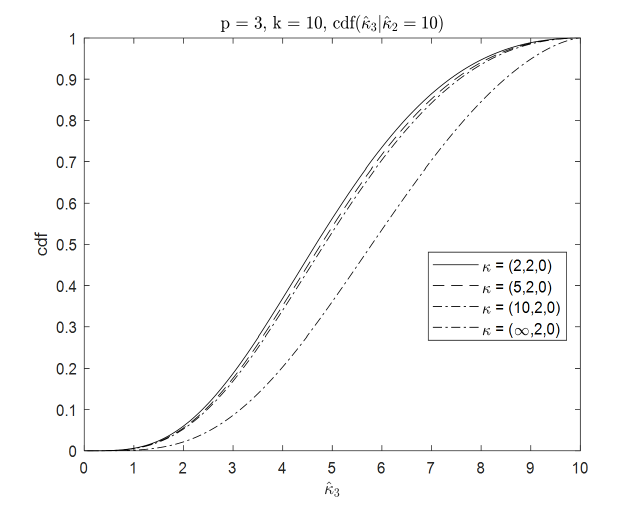}\includegraphics[scale=0.75]{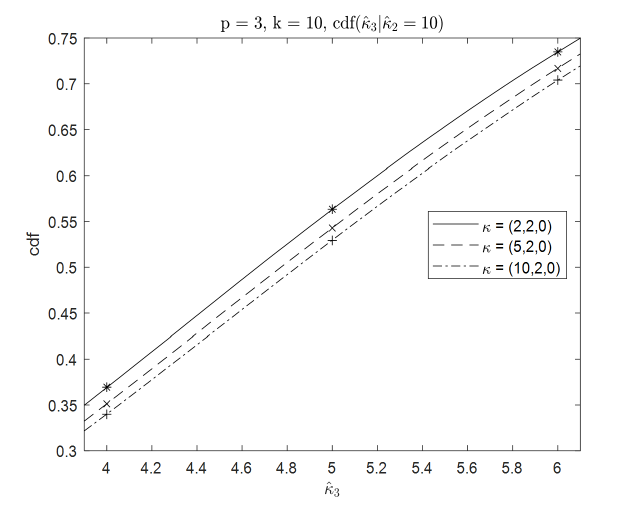}
\par\end{centering}
\caption{\label{fig:Ccdf} Left panel, simulated conditional cdf. Right panel
including the calculated points as in Table \ref{tab:CondCDF}.}
\end{figure}

Using the same simulation design, we next present the null rejection
probabilities (NRP) for the $AR$ test procedures $\phi_{\chi^{2}}$,
$\phi_{c_{1}}$ and $\phi_{c_{p-1}}$at 5\% level, as a function of
$\kappa_{p-1}$. For each $m_{W}>1$, we first set $\kappa_{1}=\ldots=\kappa_{p-2}=100,000$
and these results are presented in Figure \ref{fig:NRPlarge}. It
is clear that the behaviour of $\phi_{c_{p-1}}$ is the same for all
values of $m_{W}$, whereas here the behaviours of $\phi_{\chi^{2}}$
and $\phi_{c_{1}}$ are the same for $m_{W}>1$, as $\kappa_{1}$
is set very large.

\begin{figure}
\begin{centering}
\includegraphics[scale=0.55]{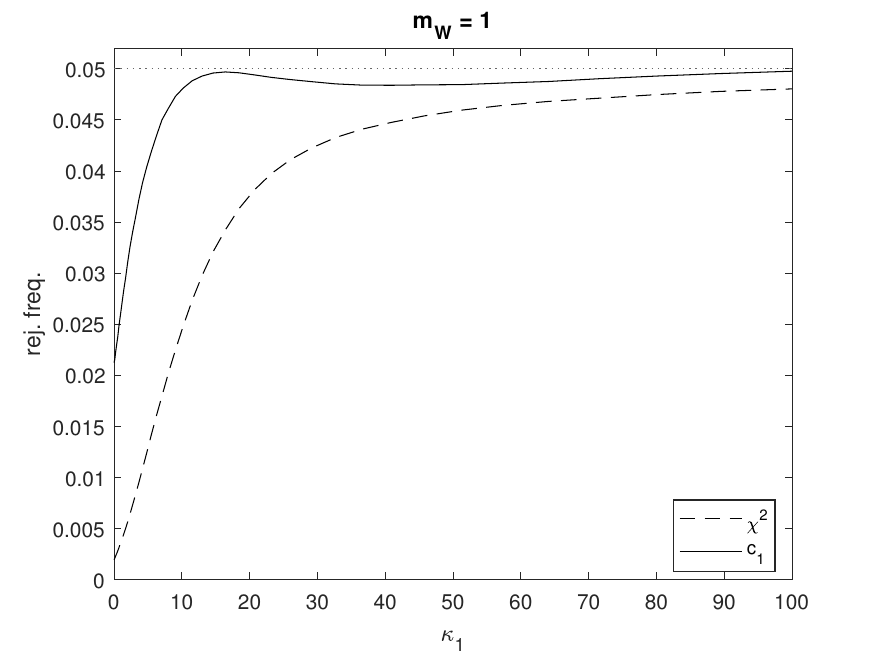}\includegraphics[scale=0.55]{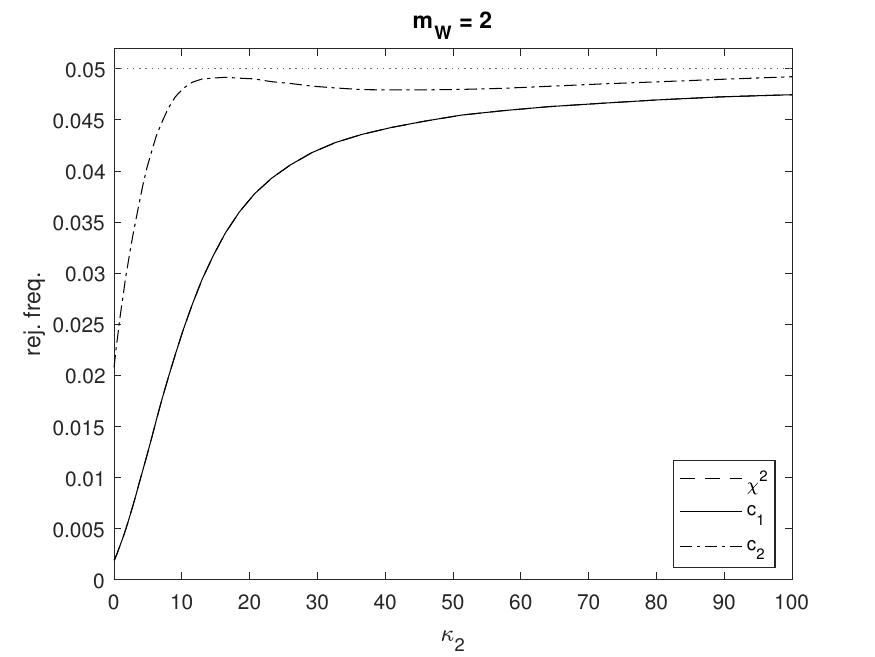}
\par\end{centering}
\begin{centering}
\includegraphics[scale=0.55]{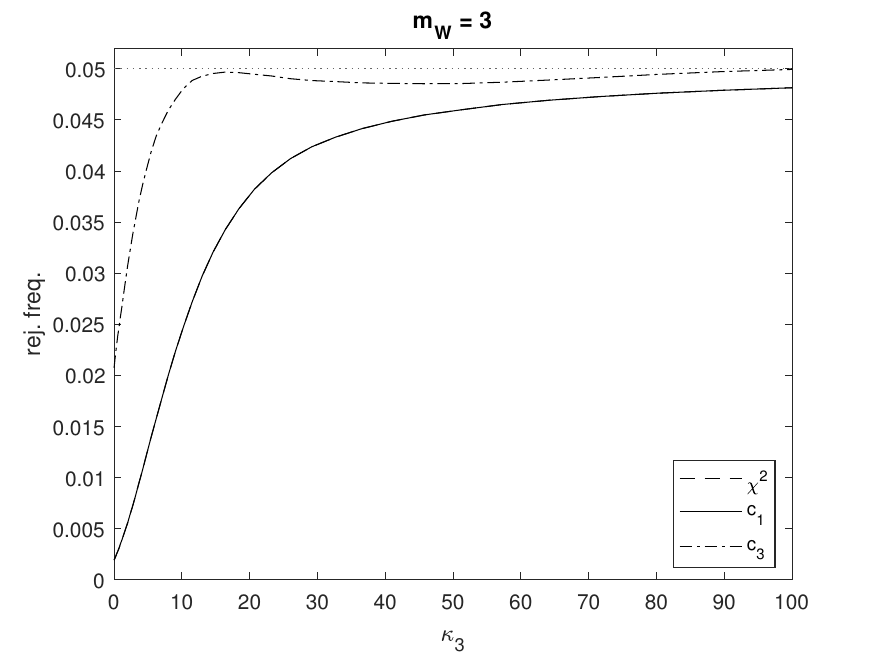}
\par\end{centering}
\caption{\label{fig:NRPlarge} Null rejection probabilities at 5\% level of
$\phi_{\chi^{2}}$, $\phi_{c_{1}}$ and $\phi_{c_{p-1}}$ for different
values of $m_{W}$ as a function of $\kappa_{p-1}$, with $k-m_{W}=4$.
For $m_{W}>1$, $\kappa_{j}=100,000$ for $j<m_{W}$. Critical values
for $\phi_{c_{1}}$ and $\phi_{c_{p-1}}$ obtained from Table 6 in
GKM, together with linear interpolation, with for $\phi_{c_{p-1}}$ the conditioning largest eigenvalue $\hat{\kappa}_{1}$ substituted by the second-smallest one, $\hat{\kappa}_{p-1}$.}
\end{figure}

In Figure \ref{fig:NRPmw2} we next show the NRPs of the tests for
$m_{W}=2$, in the left panel as a function of $\kappa_{2}$ with
$\kappa_{1}=\kappa_{2}$ and in the right panel as a function of $\kappa_{1}$
for a fixed value of $\kappa_{2}=10$. We see that the rejection frequencies
of the tests are monotonically increasing in $\kappa_{1}$ when keeping
$\kappa_{2}$ fixed and the NRPs as a function of $\kappa_{2}$ with
$\kappa{}_{1}=\kappa_{2}$ are hence all below those when $\kappa_{1}=100,000$
as displayed in Figure \ref{fig:NRPlarge}. 

\begin{figure}
\begin{centering}
\includegraphics[scale=0.55]{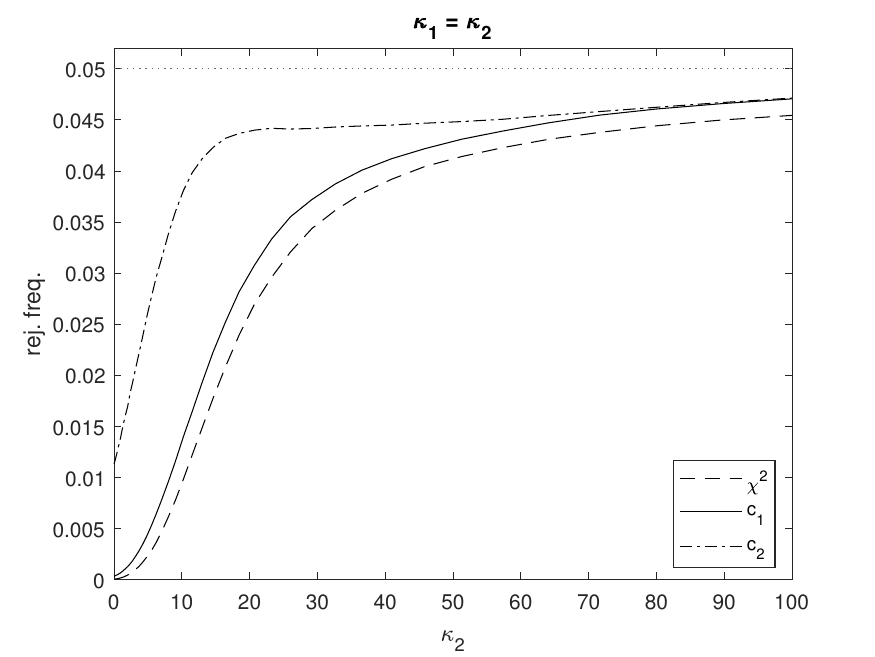}\includegraphics[scale=0.55]{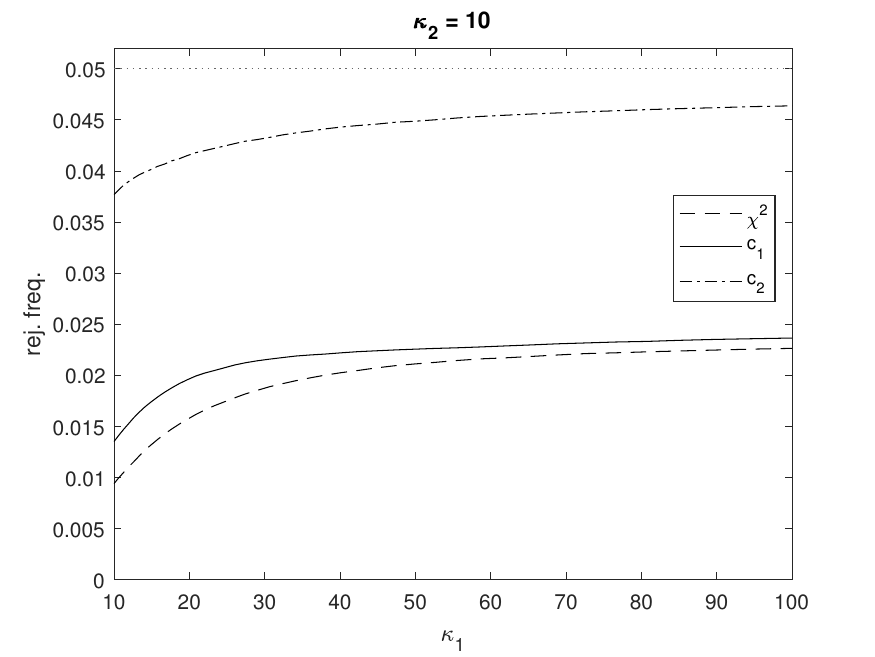}
\par\end{centering}
\caption{\label{fig:NRPmw2} NRPs at 5\% level of $\phi_{\chi^{2}}$, $\phi_{c_{1}}$
and $\phi_{c_{2}}$ for $m_{W}=2$ and $k-m_{W}=4$. Left panel as
a function of $\kappa_{2}$, with $\kappa_{1}=\kappa_{2}$. Right
panel as a function of $\kappa_{1}$ for a fixed value of $\kappa_{2}=10$.
See further notes to Figure \ref{fig:NRPlarge}.}
\end{figure}

Figure \ref{fig:NRPmw3} shows similar results for the $m_{W}=3$
case, confirming that the $\phi_{c_{p-1}}$ test has correct size.

As, for $p>2$, $c_{1-\alpha}\left(\hat{\kappa}_{p-1},k-m_W\right)<c_{1-\alpha}\left(\hat\kappa_1,k-m_W\right)$, $\phi_{c_{p-1}}$ has strictly higher power than the GKM test $\phi_{c_1}$ when $m_W>1$.

\begin{figure}
\begin{centering}
\includegraphics[scale=0.55]{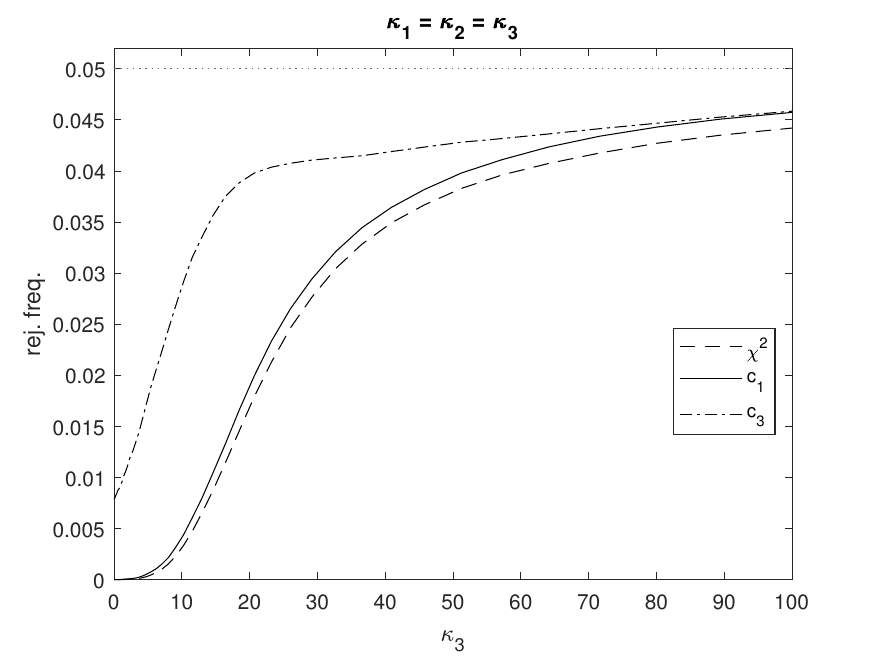}\includegraphics[scale=0.55]{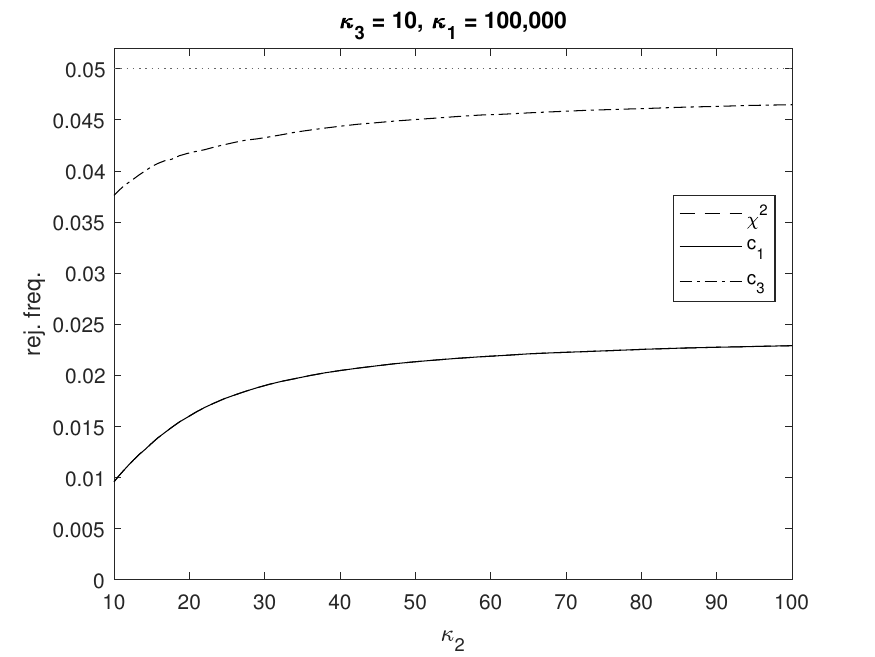}
\par\end{centering}
\begin{centering}
\includegraphics[scale=0.55]{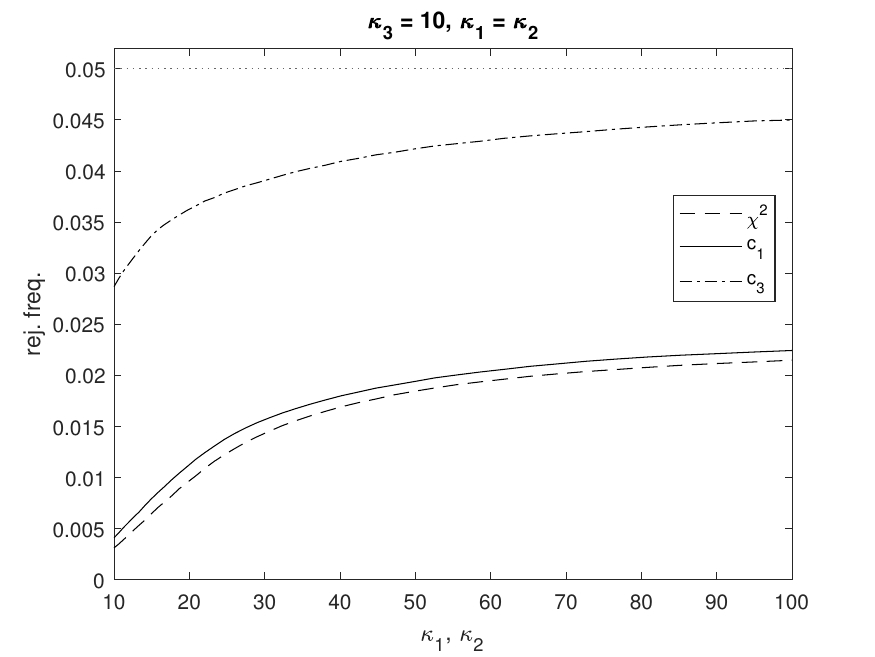}
\par\end{centering}
\caption{\label{fig:NRPmw3} NRPs at 5\% level of $\phi_{\chi^{2}}$, $\phi_{c_{1}}$
and $\phi_{c_{3}}$ for $m_{W}=3$ and $k-m_{W}=4$. First graph as
a function of $\kappa_{3}$, with $\kappa_{1}=\kappa_{2}=\kappa_{3}$.
Second graph as a function of $\kappa_{2}$ for a fixed value of $\kappa_{3}=10$,
with $\kappa_{1}=100,000$. Third graph as a function of $\kappa_{1}=\kappa_{2}$,
for a fixed value of $\kappa_{3}=10$. See further notes to Figure
\ref{fig:NRPlarge}.}
\end{figure}

\section{Feasible $AR$ Test Asymptotic Power Comparisons}
\label{sec:Sims}

GKM specify in their Section 3 the mild conditions under which their finite sample results extend to asymptotic results in a setting where the instruments are random, the reduced-form and first-stage errors not necessarily normally distributed and $\Omega$ is unknown. Instead they assume that the random vectors $\left(\varepsilon_i,Z_i',V_{Xi}',V_{Wi}'\right)$ for $i=1,\ldots,n$ are i.i.d.\ with distribution $F$.

Let $U_i=\left(\varepsilon_i+V_{Wi}'\gamma,V_{Wi}'\right)'$. Then GKM consider the parameter space $\mathcal{F}$ for $\left(\gamma,\Pi_W,\Pi_X,F\right)$ under the null hypothesis $H_0:\beta=\beta_0$,
\begin{align}
\label{eq:parsF}
    \mathcal{F} &=\{(\gamma, \Pi_W, \Pi_X, F): \gamma\in \mathbb{R}^{m_W}, \Pi_W\in\mathbb{R}^{k \times m_W}, \Pi_X \in \mathbb{R}^{k\times m_X}, \notag  \\
    &\mathbb{E}_F(||T_i||^{2+\delta_1})\leq B, \text{ for } T_i \in \{\text{vec}(Z_iU_i'), U_i, Z_i\}, \notag \\
    &\mathbb{E}_F[Z_iV_i']=0^{k \times (m+1)}, \mathbb{E}_F\left(\text{vec}(Z_iU_i')(\text{vec}(Z_iU_i'))'\right) = \mathbb{E}_F\left(U_iU_i'\right) \otimes \mathbb{E}_F \left(Z_iZ_i'\right), \notag \\
    &\kappa_{min}(A)\geq\delta_2 \text{ for } A \in \{ \mathbb{E}_F(Z_iZ_i'), \mathbb{E}_F(U_iU_i') \}\}, 
\end{align}
for some $\delta_1>0$, $\delta_2>0$ and $B<\infty$.

The feasible subset $AR$ test is given by
\begin{align*}
AR_{n,f}\left(\beta_{0}\right) & =\frac{\left(y_{0}-W\hat{\tilde{\gamma}}_{L}\right)^{\prime}P_{Z}\left(y_{0}-W\hat{\tilde{\gamma}}_{L}\right)}{\left(1,-\hat{\tilde{\gamma}}_{L}^{\prime}\right)\hat{\Omega}_0\left(1,-\hat{\tilde{\gamma}}_{L}^{\prime}\right)^{\prime}}\\
 & =\min\,\text{eval}\left(\left(y_{0},W\right)^{\prime}P_{Z}\left(y_{0},W\right)\hat{\Omega}_0^{-1}\right),
\end{align*}
where
\[
\hat{\Omega}_0=\left(y_{0},W\right)^{\prime}M_{Z}\left(y_{0},W\right)/(n-k-1),
\]
and $\hat{\tilde{\gamma}}_{L}$ is the LIML estimator of $\gamma$
in restricted model (\ref{eq:restrmod}). The additional degree of
freedom loss is because we include a constant in the model by taking
all variables in deviations from sample means. Note that the feasible
test is identical to the LIML-based \citet{Basmann1960} test for overidentifying
restrictions in the restricted model.

As in GKM, denote the $p$ eigenvalues of the matrix $\left(y_{0},W\right)^{\prime}P_{Z}\left(y_{0},W\right)\hat\Omega_0^{-1}$
by $\hat{\kappa}_{1n}\geq\hat{\kappa}_{2n}\geq\ldots\geq\hat{\kappa}_{pn}$,
so that $AR_{n,f}\left(\beta_{0}\right)=\hat{\kappa}_{pn}$. GKM (Theorem 5, p 502) then show the result that for the parameter space $\mathcal{F}$ defined in (\ref{eq:parsF}) the feasible conditional subvector $AR$ test rejects $H_0$ at nominal size $\alpha$ asymptotically if
$$
AR_{n,f}\left(\beta_0\right)>c_{1-\alpha}\left(\hat\kappa_{1n},k-m_W\right),
$$
where $c_{1-\alpha}(.,.)$ is the same conditional critical value function as for the finite sample results in Section \ref{sec:finite_sample_analysis}. Again, these critical value functions are tabulated in GKM for $\alpha\in\{0.01,0.05,0.10\}$ and $k-m_W\in\{1,\ldots,20\}$ for a grid of values for $\hat\kappa_1$, with the conditional critical values for any value of $\hat\kappa_{1n}$ obtained by linear interpolation.

GKM show in the proof of their Theorem 5 that the limiting null rejection probabilities equal the finite sample ones as derived in Section \ref{sec:finite_sample_analysis}, see GKM (Comment 1, p 502). Because of this convergence, it follows from the results in Section \ref{sec:finite_sample_analysis} that the feasible conditional subvector $AR$ test also rejects $H_0$ at nominal size $\alpha$ asymptotically if we condition on the second-smallest eigenvalue, that is if
\[
AR_{n,f}\left(\beta_0\right)>c_{1-\alpha}\left(\hat\kappa_{(p-1)n},k-m_W\right).
\]

For $m_W>1$, as the critical value when conditioning on the second
smallest eigenvalue is smaller than that conditioning on the largest
eigenvalue, this $AR$ subvector testing procedure has strictly higher
power than the GKM test. This is confirmed in the following simulations.

For the full model specification (\ref{eq:fullmod}), we set $m_{X}=1$,
$m_{W}=3$, $k=7$, $\beta=0$, $\gamma=\left(-1,1,1\right)^{\prime}$,
$n=250$, $Z_{i}\sim i.i.d.~\mathcal{N}\left(0,I_{k}\right)$ and
$\left(\varepsilon_{i},V_{Xi},V_{Wi}^{\prime}\right)^{\prime}\sim i.i.d.~\mathcal{N}\left(0,\Sigma\right)$.
The specific parameter values of $\Sigma$, $\pi_{x}$ and $\Pi_{W}$
are given in Appendix \ref{subsec:Monte-Carlo-Parameter}. The resulting
expectation of the concentration matrix is given by
\[
\mathbb{E}_{Z}\left(\Theta_{W}^{\prime}\Theta_{W}\right)=n\Sigma_{V_{W}V_{W}.\varepsilon}^{-1/2}\Pi_{W}^{\prime}\Pi_{W}\Sigma_{V_{W}V_{W}.\varepsilon}^{-1/2}=\left(\begin{array}{ccc}
\kappa_{1} & 0 & 0\\
0 & \kappa_{2} & 0\\
0 & 0 & \kappa_{3}
\end{array}\right).
\]
Figure \ref{fig:Power} presents power curves for $\phi_{\chi^{2}}$,
$\phi_{1}$ and $\phi_{p-1}$, for testing $H_{0}:\beta=0$ at level
$\alpha=0.05$, for 100,000 replications at each value of $\beta=-2,-1.95,\ldots,2$.
We consider the weakly identified situations with $\kappa=\left(\kappa_{1},\kappa_{2},\kappa_{3}\right)=\left(35,25,15\right)$
and $\kappa=\left(100,30,15\right)$ and a strongly identified case
with $\kappa=\left(100,95,90\right)$. The $\phi_{p-1}$ test controls size and clearly
dominates power in the weak identification settings, with the tests behaving
similarly in the strong identified setting, as expected.

\begin{figure}
\begin{centering}
\includegraphics[scale=0.55]{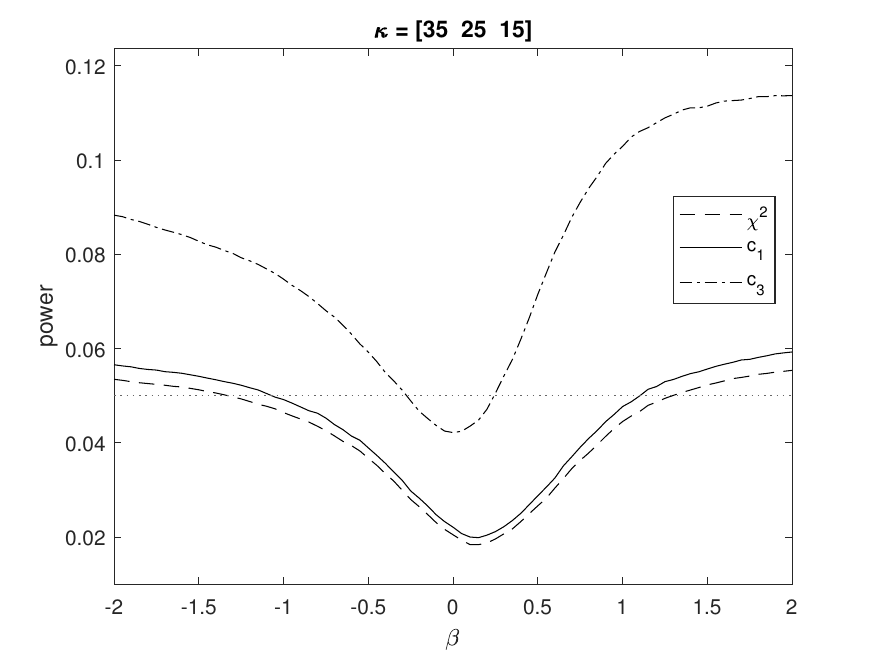}\includegraphics[scale=0.55]{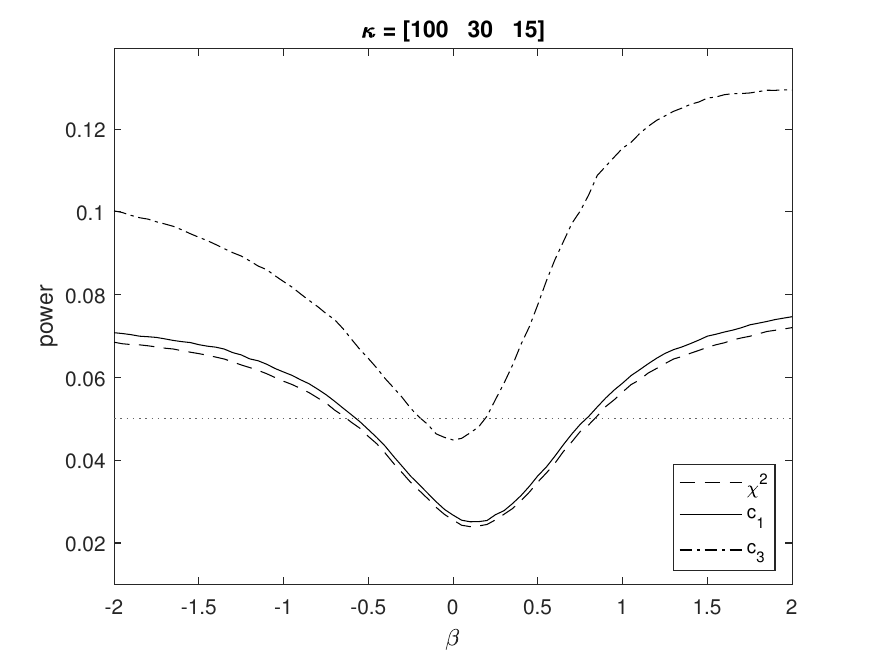}
\par\end{centering}
\begin{centering}
\includegraphics[scale=0.55]{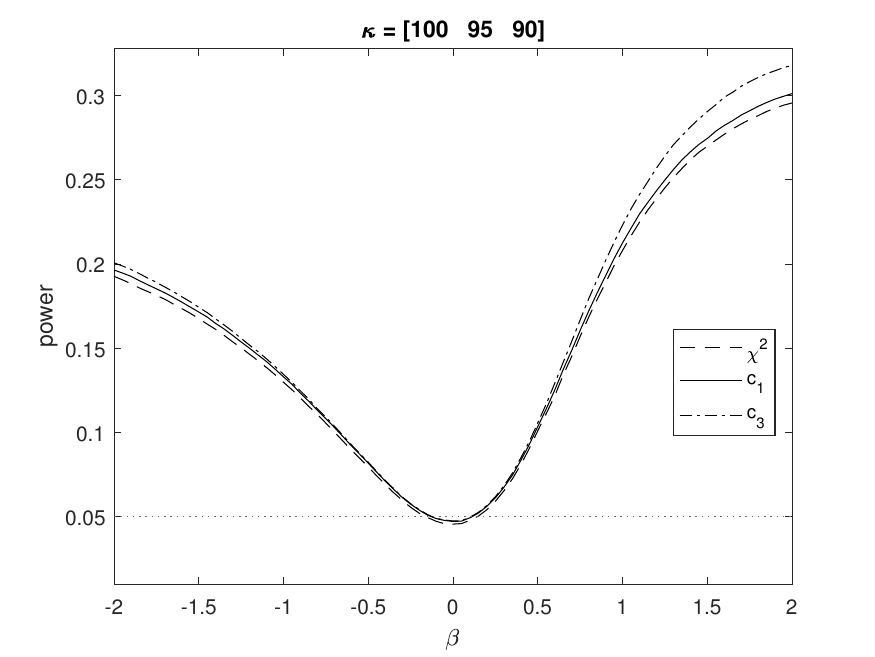}
\par\end{centering}
\caption{\label{fig:Power} Power of $\phi_{\chi^{2}}$, $\phi_{1}$ and $\phi_{p-1}$,
for testing $H_{0}:\beta=0$ at level $\alpha=0.05$. $m_{W}=3$,
$k=7$, $n=250$, for different values of $\kappa=\left(\kappa_{1},\kappa_{2},\kappa_{3}\right)$.}
\end{figure}

\section{Concluding Remarks}
\label{seq:Conclusions}

We have shown that the \cite*{GuggenbergerKleiMavrQE2019} conditional
critical value function for the subvector $AR$ test in the linear model under
homoskedasticy applies when conditioning on the second-smallest eigenvalue
instead of the largest eigenvalue of a non-central Wishart distributed
matrix. This test procedure that conditions on the second-smallest
eigenvalue controls size and has power strictly higher than the GKM
test.

\cite*{Guggenberger_Kleibergen_Mavroeidis_2024} show further that the same conditional critical value function applies to a situation with conditional heteroskedasticity, where the covariance matrix has an approximate Kronecker product structure. The subvector $AR$ statistic is redefined to take account of this Kronecker product structure and the eigenvalues are from the associated concentration matrix. We present further details in Appendix \ref{sec:Het}. As for the homoskedastic case, it follows that also for this particular structure of conditional heteroskedasticy, conditioning on the second-smallest eigenvalue controls size and has power strictly higher than when conditioning on the largest eigenvalue.

\section*{Appendix}

\global\long\def\thesection{A}%
\global\long\def\theequation{A.\arabic{equation}}%
\global\long\def\thetable{A\arabic{table}}%
\setcounter{table}{0}\setcounter{equation}{0}

\subsection{Proof of Proposition 1}

\label{subsec:Proof-of-Proposition}

Drawing heavily from GKM, including their notation, we have the $k\times\left(m_{W}+1\right)$
matrix $\Xi\sim\mathcal{N}\left(\mathcal{M},I_{kp}\right)$, where
$\text{var}\left(\text{vec}\left(\Xi\right)\right)=I_{kp}$, with
$p=m_{W}+1$, and 
\[
\Xi'\Xi\sim\mathcal{W}_{m_{W}+1}\left(k,I_{m_{W}+1},\mathcal{M}'\mathcal{M}\right).
\]
Then partition $\Xi$ as
\[
\Xi=\left[\begin{array}{cc}
\Xi_{11} & \Xi_{12}\\
\Xi_{21} & \Xi_{22}
\end{array}\right],
\]
where $\Xi_{11}$ is a $\left(k-m_{W}+1\right)\times2$ matrix, $\Xi_{12}$
is a $\left(k-m_{W}+1\right)\times\left(m_{W}-1\right)$ matrix, $\Xi_{21}$
is a $\left(m_{W}-1\right)\times2$ matrix and $\Xi_{22}$ is a $\left(m_{W}-1\right)\times\left(m_{W}-1\right)$
matrix. Partition $\mathcal{M}$ conformably with $\Xi$ and let $\mu_{i}$,
$i=1,\ldots,m_{W}$, denote the singular values of $\mathcal{M}$, with
$\mu_{1}\geq\mu_{2}\geq\ldots\geq\mu_{m_{W}}\geq0$. Without loss
of generality, we can set
\[
\mathcal{M}=\left[\begin{array}{cc}
\mathcal{M}_{11} & 0^{\left(k-m_{W}+1\right)\times\left(m_{W}-1\right)}\\
0^{\left(m_{W}-1\right)\times2} & \mathcal{M}_{22}
\end{array}\right],
\]
where
\[
\mathcal{M}_{11}=\left[\begin{array}{cc}
0^{\left(k-m_{W}\right)\times1} & 0^{\left(k-m_{W}\right)\times1}\\
0 & \mu_{m_{W}}
\end{array}\right]
\]
\[
\mathcal{M}_{22}=\text{diag}\left(\mu_{1},\ldots,\mu_{m_{W}-1}\right).
\]

It follows that
\begin{align*}
\Xi_{11}^{\prime}\Xi_{11} & \sim\mathcal{W}_{2}\left(k-m_{W}+1,I_{2},\mathcal{M}_{11}^{\prime}\mathcal{M}_{11}\right)\\
\Xi_{22}^{\prime}\Xi_{22} & \sim\mathcal{W}_{m_{W}-1}\left(m_{W}-1,I_{m_{W}-1},\mathcal{M}_{22}^{\prime}\mathcal{M}_{22}\right).
\end{align*}

Let the $\left(m_{W}+1\right)\times\left(m_{W}+1\right)$ matrix $O$
be given by
\[
O=\left[\begin{array}{cc}
O_{11} & \Xi_{21}^{\prime}\Xi_{22}^{-1\prime}O_{22}\\
-\Xi_{22}^{-1}\Xi_{21}O_{11} & O_{22}
\end{array}\right],
\]
with
\begin{align*}
O_{11} & =\left(I_{2}+\Xi_{21}^{\prime}\Xi_{22}^{-1\prime}\Xi_{22}^{-1}\Xi_{21}\right)^{-1/2}\\
O_{22} & =\left(I_{m_{W}-1}+\Xi_{22}^{-1}\Xi_{21}\Xi_{21}^{\prime}\Xi_{22}^{-1\prime}\right)^{-1/2}.
\end{align*}
We have that $O'O=OO'=I_{m_{W}+1}$. Then define
\begin{align*}
\tilde{\Xi} & \coloneqq\Xi O\\
 & =\left[\begin{array}{cc}
\tilde{\Xi}_{11} & \tilde{\Xi}_{12}\\
0 & \tilde{\Xi}_{22}
\end{array}\right],
\end{align*}
with
\begin{align*}
\tilde{\Xi}_{11} & =\left(\Xi_{11}-\Xi_{12}\Xi_{22}^{-1}\Xi_{21}\right)O_{11}\\
\tilde{\Xi}_{12} & =\left(\Xi_{11}\Xi_{21}^{\prime}\Xi_{22}^{-1\prime}+\Xi_{12}\right)O_{22}\\
\tilde{\Xi}_{22} & =\left(\Xi_{21}\Xi_{21}^{\prime}\Xi_{22}^{-1\prime}+\Xi_{22}\right)O_{22}
\end{align*}
We have that for the eigenvalues $\hat{\kappa}_{j}$, $j=1,\ldots,m_{W}+1,$
\[
\hat{\kappa}_{j}\left(\tilde{\Xi}'\tilde{\Xi}\right)=\hat{\kappa}_{j}\left(\Xi'\Xi\right).
\]

Further,
\[
\tilde{\Xi}'\tilde{\Xi}=\left[\begin{array}{cc}
\tilde{\Xi}_{11}^{\prime}\tilde{\Xi}_{11} & \tilde{\Xi}_{11}^{\prime}\tilde{\Xi}_{12}\\
\tilde{\Xi}_{12}^{\prime}\tilde{\Xi}_{11} & \tilde{\Xi}_{12}^{\prime}\tilde{\Xi}_{12}+\tilde{\Xi}_{22}^{\prime}\tilde{\Xi}_{22}
\end{array}\right].
\]
Let
\[
\tilde{A}:=\tilde{\Xi}'\tilde{\Xi}=\left[\begin{array}{cc}
\tilde{A}_{11} & \tilde{A}_{12}\\
\tilde{A}_{12}^{\prime} & \tilde{A}_{22}
\end{array}\right],
\]
then, as above, let
\[
\tilde{D}=\left[\begin{array}{cc}
\tilde{D}_{11} & \tilde{A}_{12}\tilde{A}_{22}^{-1}\tilde{D}_{22}\\
-\tilde{A}_{22}^{-1}\tilde{A}_{12}^{\prime}\tilde{D}_{11} & \tilde{D}_{22}
\end{array}\right],
\]
with
\begin{align*}
\tilde{D}_{11} & =\left(I_{2}+\tilde{A}_{12}\tilde{A}_{22}^{-1}\tilde{A}_{22}^{-1}\tilde{A}_{12}^{\prime}\right)^{-1/2}\\
\tilde{D}_{22} & =\left(I_{m_{W}-1}+\tilde{A}_{22}^{-1}\tilde{A}_{12}^{\prime}\tilde{A}_{12}\tilde{A}_{22}^{-1}\right)^{-1/2},
\end{align*}
and so
\[
\tilde{A}\tilde{D}=\left[\begin{array}{cc}
\left(\tilde{A}_{11}-\tilde{A}_{12}\tilde{A}_{22}^{-1}\tilde{A}_{12}^{\prime}\right)\tilde{D}_{11} & \left(\tilde{A}_{11}\tilde{A}_{12}\tilde{A}_{22}^{-1}+\tilde{A}_{12}\right)\tilde{D}_{22}\\
0 & \left(\tilde{A}_{12}^{\prime}\tilde{A}_{12}\tilde{A}_{22}^{-1}+\tilde{A}_{22}\right)\tilde{D}_{22}
\end{array}\right].
\]
Define 
\[
\tilde{C}\coloneqq\tilde{D}\tilde{'A}\tilde{D}=\left[\begin{array}{cc}
\tilde{C}_{11} & \tilde{C}_{12}\\
\tilde{C}_{12}^{\prime} & \tilde{C}_{22}
\end{array}\right],
\]
where
\begin{align*}
\tilde{C}_{11} & =\tilde{D}_{11}^{\prime}\left(\tilde{A}_{11}-\tilde{A}_{12}\tilde{A}_{22}^{-1}\tilde{A}_{12}^{\prime}\right)\tilde{D}_{11}\\
\tilde{C}_{12} & =\tilde{D}_{11}^{\prime}\left(\tilde{A}_{11}-\tilde{A}_{12}\tilde{A}_{22}^{-1}\tilde{A}_{12}^{\prime}\right)\tilde{A}_{12}\tilde{A}_{22}^{-1}\tilde{D}_{22}\\
\tilde{C}_{22} & =\tilde{D}_{22}^{\prime}\tilde{A}_{22}^{-1}\tilde{A}_{12}^{\prime}\left(\tilde{A}_{11}\tilde{A}_{12}\tilde{A}_{22}^{-1}+\tilde{A}_{12}\right)\tilde{D}_{22}+\tilde{D}_{22}^{\prime}\left(\tilde{A}_{12}^{\prime}\tilde{A}_{12}\tilde{A}_{22}^{-1}+\tilde{A}_{22}\right)\tilde{D}_{22}.
\end{align*}

Consider $\mu_{m_{W}-1}\rightarrow\infty$, which implies that the
singular values $\mu_{j}\rightarrow\infty$ for $j=1,.\ldots,m_{W}-1$,
which are the diagonal elements of the diagonal matrix \textbf{$\mathcal{M}_{22}$}.
As $\Xi_{22}\sim\mathcal{N}\left(\mathcal{M}_{22},I_{\left(m_{W}-1\right)^{2}}\right)$
it follows that all diagonal elements of $\Xi_{22}$ approach infinity
as $\mu_{m_{W}-1}\rightarrow\infty$, whereas the off-diagonal elements
are i.i.d. $\mathcal{N}\left(0,1\right)$ distributed. It follows that
$\Xi_{22}^{-1}\stackrel{p}{\rightarrow}0$ as $\mu_{m_{W}-1}\rightarrow\infty$,
where convergence in probability is with respect to the singular values
$\mu_{j}\rightarrow\infty$ for $j=1,.\ldots,m_{W}-1$. We can then
write the sequence as $\left(\Xi_{22}\left(\mu_{W-1}\right)\right)^{-1}$,
with the convergence in probability then defined as
\[
\lim_{\mu_{m_{W}-1}\rightarrow\infty}\mathbb{P}\left(\Vert\left(\Xi_{22}\left(\mu_{W-1}\right)\right)^{-1}-0\Vert>\varepsilon\right)=0
\]
for any $\varepsilon>0$. We will use this concept of convergence
in probability throughout below.

As the elements $\Xi_{12}$ are also i.i.d. $\mathcal{N}$$\left(0,1\right)$
distributed, we get
\begin{align*}
O_{11} & \stackrel{p}{\rightarrow}I_{2};\,\,O_{22}\stackrel{p}{\rightarrow}I_{m_{W}-1},
\end{align*}
from which it follows that 
\[
\tilde{\Xi}_{11}\stackrel{p}{\rightarrow}\Xi_{11};\,\,\tilde{\Xi}_{12}\stackrel{p}{\rightarrow}\Xi_{12};\,\,\tilde{\Xi}_{22}-\Xi_{22}\stackrel{p}{\rightarrow}0,
\]
as $\mu_{m_{W}-1}\rightarrow\infty$. 

It then follows that the diagonal elements of $\tilde{A}_{22}=\tilde{\Xi}_{12}^{\prime}\tilde{\Xi}_{12}+\tilde{\Xi}_{22}^{\prime}\tilde{\Xi}_{22}$
approach infinity as $\mu_{m_{W}-1}\rightarrow\infty$. Therefore,

\[
\tilde{D}_{11}\stackrel{p}{\rightarrow}I_{2};\,\,\tilde{D}_{22}\stackrel{p}{\rightarrow}I_{m_{W}-1},
\]
and
\[
\tilde{C}_{11}\stackrel{p}{\rightarrow}\Xi_{11}^{\prime}\Xi_{11};\,\,\tilde{C}_{12}\stackrel{p}{\rightarrow}0;\,\,\tilde{C}_{22}-\Xi_{22}^{\prime}\Xi_{22}\stackrel{p}{\rightarrow}0,
\]
as $\mu_{m_{W}-1}\rightarrow\infty$.

As $\tilde{D}'\tilde{D}=\tilde{D}\tilde{D}'=I_{m_{W}+1}$, we have
for the eigenvalues
\[
\hat{\kappa}_{j}\left(\tilde{C}\right)=\hat{\kappa}_{j}\left(\tilde{D}'\tilde{\Xi}'\tilde{\Xi}\tilde{D}\right)=\hat{\kappa}_{j}\left(\tilde{\Xi}'\tilde{\Xi}\right)=\hat{\kappa}_{j}\left(\Xi'\Xi\right),
\]
for $j=1,\ldots,m_{W}+1$. As $\tilde{C}$ approaches a block diagonal
matrix as $\mu_{m_{W}-1}\rightarrow\infty$, we then get that, using
generic notation, 
\begin{align*}
\left\{ \hat{\kappa}_{m_{W}}\left(\Xi'\Xi\right),\hat{\kappa}_{m_{W}+1}\left(\Xi'\Xi\right)\right\}  & =\text{\ensuremath{\left\{ \hat{\kappa}_{1}\left(\Xi_{11}^{\prime}\Xi_{11}\right),\hat{\kappa}_{2}\left(\Xi_{11}^{\prime}\Xi_{11}\right)\right\} ;}}\\
\hat{\kappa}_{j}\left(\Xi'\Xi\right) & \rightarrow\infty\text{, for }j=1,\ldots,m_{W}-1,
\end{align*}
as $\mu_{m_{W}-1}\rightarrow\infty$.

\subsection{Conditional Distribution of $\hat{\kappa}_{p}|\hat{\kappa}_{p-1}$}

\label{subsec:Eigenvalues}

Denoting $\Gamma_{m}(a)=\pi^{m\left(m-1\right)/4}\prod_{i=1}^{m}\Gamma\left(a-\left(i-1\right)/2\right)$,
we can define the joint pdf for the (real) ordered eigenvalues $\hat{\kappa}_{1}\geq\hat{\kappa}_{2}\geq\cdot\cdot\cdot\geq\hat{\kappa}_{p}\geq0$
of the real non-central Wishart matrix $\Xi'\Xi\sim\mathcal{W}_{p}(k,I_{p},\mathcal{M}'\mathcal{M})$
as (\citealp{james1964distributions}) 
\[
f(x_{1},...,x_{p})={}_{0}F_{1}\Big(\frac{1}{2}k;\frac{1}{4}\Omega,S\Big)K\prod_{i=1}^{p}e^{-\kappa_{i}/2}e^{-x_{i}/2}x_{i}^{\alpha}\prod_{i<j}^{p}\left(x_{i}-x_{j}\right),
\]
where $x_{1}\geq x_{2}\geq\cdot\cdot\cdot\geq x_{p}\geq0$, $\alpha=\left(k-p-1\right)/2$,
$\Omega=\text{diag}\left(\kappa_{1},\kappa_{2},...,\kappa_{p}\right)$,
$S=\text{diag}\left(x_{1},x_{2},...,x_{p}\right)$,
and $K=\frac{\pi^{(p^{2}/2)}}{2^{kp/2}\Gamma_{p}(k/2)\Gamma_{p}(p/2)}$.
As before we have that $\kappa_{1}\geq\kappa_{2}\geq\cdot\cdot\cdot\geq\kappa_{p}$
are the eigenvalues of $\mathcal{M}'\mathcal{M}$. Moreover, the hypergeometric
function $_{0}F_{1}$ of the matrix arguments $\frac{1}{4}\Omega$
and $S$ is defined as 
\[
_{0}F_{1}\Big(\frac{1}{2}k;\frac{1}{4}\Omega,S\Big)=\sum_{n=0}^{\infty}\sum_{\eta\in\mathcal{P}_{n}}\frac{C_{\eta}\left(\frac{1}{4}\Omega\right)C_{\eta}\left(S\right)}{\left(\frac{1}{2}k\right)_{\eta}C_{\eta}\left(I_{p}\right)n!},
\]
where $\mathcal{P}_{n}$ is the set of all partitions of $n$ and
$\left(\frac{1}{2}k\right){}_{\eta}$ is the generalized Pochhammer
symbol defined as 
\[
\left(\frac{1}{2}k\right)_{\eta}:=\left(\frac{1}{2}k\right)_{\left(\eta_{1},...,\eta_{p}\right)}:=\prod_{i=1}^{p}\left(\frac{1}{2}k-\frac{i-1}{2}\right)_{\eta_{i}}.
\]
Every $\eta\in\mathcal{P}_{n}$ is defined to be a tuple $\eta=\left(\eta_{1},...,\eta_{p}\right)$
such that $\eta_{1}\geq\eta_{2}\geq\cdot\cdot\cdot\geq\eta_{p}\geq1$
and $\eta_{1}+\cdot\cdot\cdot+\eta_{p}=n$. Finally, we define $C_{\eta}\left(\frac{1}{4}\Omega\right)$
and $C_{\eta}\left(S\right)$ as the zonal polynomials of $\frac{1}{4}\Omega$
and $S$, satisfying $\sum_{\eta\in\mathcal{P}_{n}}C_{\eta}\left(S\right)=\left(tr\left(S\right)\right)^{n}=\left(x_{1}+...+x_{p}\right)^{n}$
and $\sum_{\eta\in\mathcal{P}_{n}}C_{\eta}\left(\frac{1}{4}\Omega\right)=\left(\frac{1}{4}tr\left(\Omega\right)\right)^{n}=\frac{1}{4^{n}}\left(\kappa_{1}+...+\kappa_{p}\right)^{n}$,
respectively. We use the package Zonal.Sage in SageMath (\citealp{jiu2020calculation})
to calculate these zonal polynomials. \citet{jiu2020calculation}
further elaborate on how these zonal polynomials can be calculated,
following the results from \citet{muirhead2009aspects} and \citet{james1964distributions}.

This joint pdf can be used to calculate the conditional pdf. For $p=3$
this results in 
\[
f_{\hat{\kappa}_{3}|\hat{\kappa}_{2}}\left(x_{3}|x_{2}\right)=\frac{\int_{x_{2}}^{\infty}f\left(x_{1},x_{2},x_{3}\right)dx_{1}}{\int_{x_{2}}^{\infty}\int_{0}^{x_{2}}f\left(x_{1},x_{2},x_{3}\right)dx_{3}dx_{1}}.
\]
From this we can calculate the conditional cdf as usual. Table \ref{tab:CondCDF}
and Figures \ref{fig:Ccdf} and \ref{fig:Ccdfapp3} highlight
for the $p=3$ case that we have
\[
P\left(\hat{\kappa_3}\leq a|\hat{\kappa}_{2}=x\right)_{\kappa_{1}<\infty}>P\left(\hat{\kappa}_{3}\leq a|\hat{\kappa}_{2}=x\right)_{\kappa_{1}=\infty}
\]
\[
P\left(\hat{\kappa_3}>a|\hat{\kappa}_{2}=x\right)_{\kappa_{1}<\infty}\leq P\left(\hat{\kappa}_{3}>a|\hat{\kappa}_{2}=x\right)_{\kappa_{1}=\infty}.
\]
Then, combining this together with the findings from GKM
it shows that, under the null $H_{0}:\beta=\beta_{0}$, 
\[
\mathbb{P}\left(AR_{n}\left(\beta_{0}\right)_{\kappa_{1}<\infty}>c_{1-\alpha}\left(\hat{\kappa}_{2},k-m_{W}\right)\right)\leq\mathbb{P}\left(AR_{n}\left(\beta_{0}\right)_{\kappa_{1}=\infty}>c_{1-\alpha}\left(\hat{\kappa}_{2},k-m_{W}\right)\right)\leq\alpha.
\]
As we have the joint pdf for the eigenvalues of the non-central real
Wishart matrix, this can be extended to any choice of $p$, indicating
that our test has correct size for any $m_{W}$.

We illustrate this further in the figures below, where we simulate
the conditional cdfs as described in the main text. These results
confirm the monotonicity for different settings of $p$ and $k$.

\begin{figure}[h]
\begin{centering}
\includegraphics[scale=0.75]{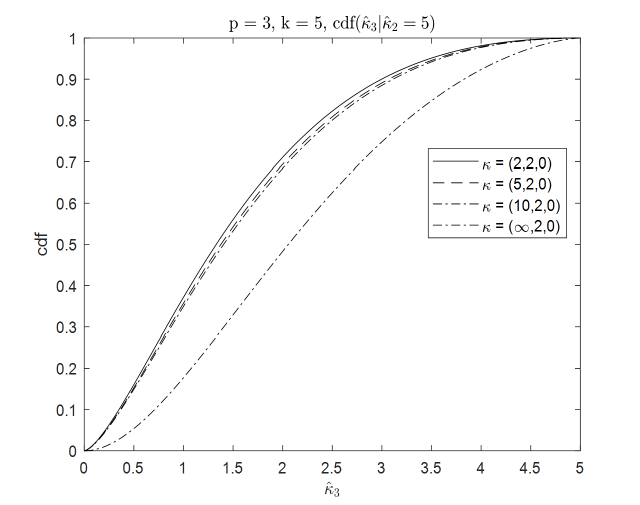}\includegraphics[scale=0.75]{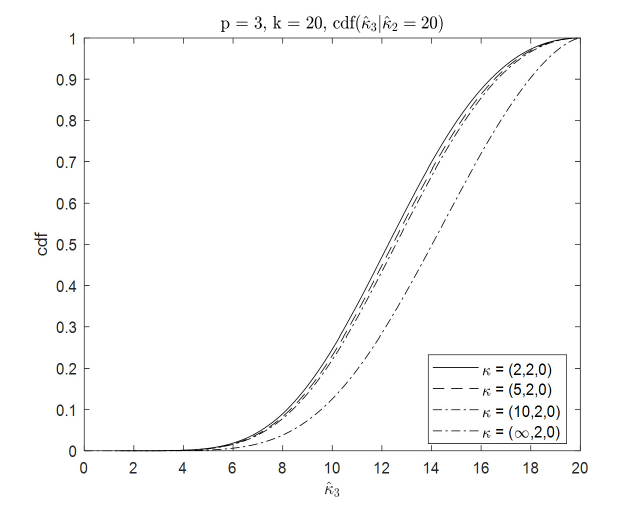}
\par\end{centering}
\caption{\label{fig:Ccdfapp3} Conditional cdfs, $p=3$}
\end{figure}

\begin{figure}[h]
\begin{centering}
\includegraphics[scale=0.75]{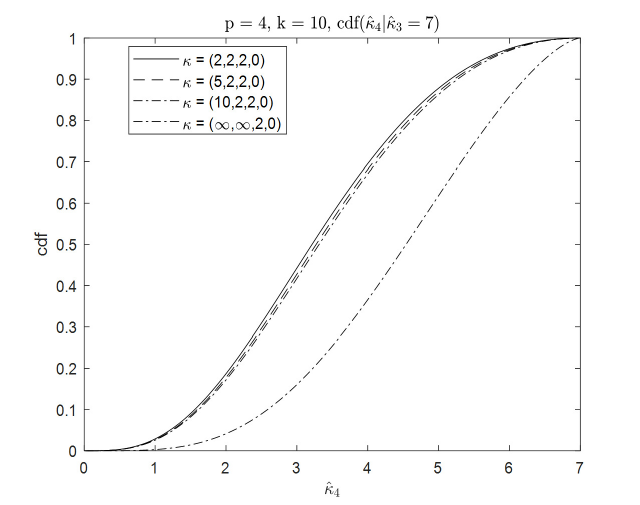}\includegraphics[scale=0.75]{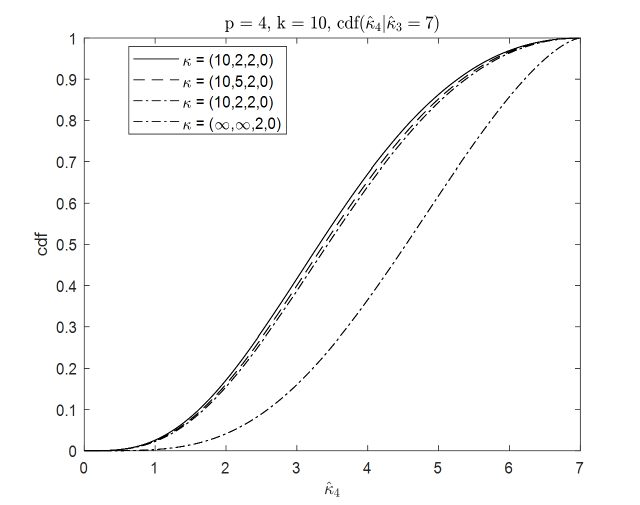}
\par\end{centering}
\caption{Conditional cdfs, $p=4$}
\end{figure}

\begin{figure}[h]
\begin{centering}
\includegraphics[scale=0.75]{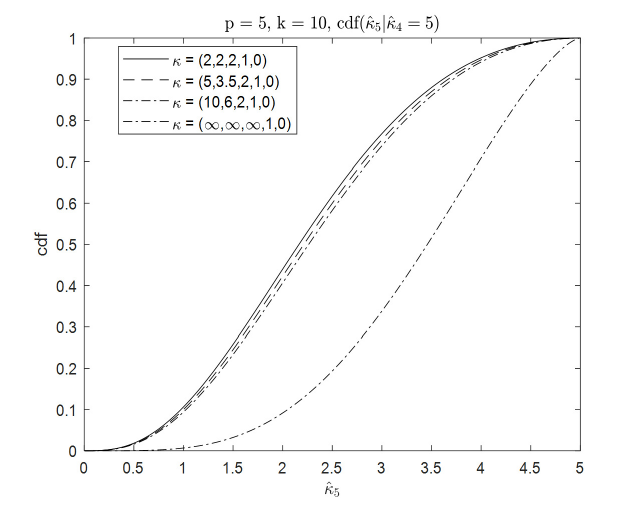}\includegraphics[scale=0.75]{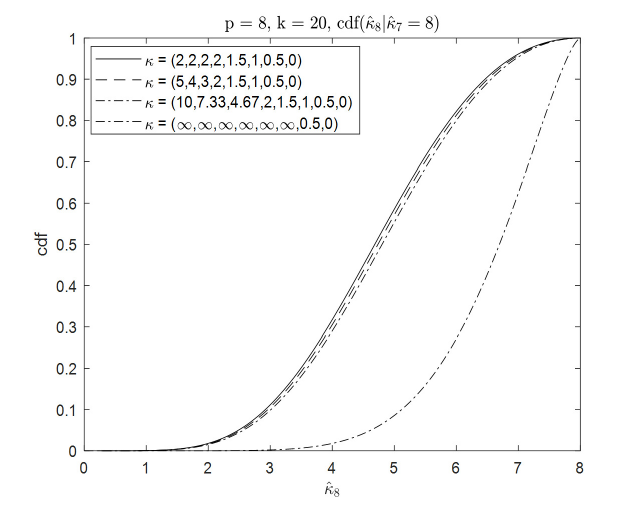}
\par\end{centering}
\caption{Conditional cdfs, $p=5$ and $p=8$}
\end{figure}

\subsection{Monte Carlo Parameter Values}

\label{subsec:Monte-Carlo-Parameter}

The specific values for $\Sigma$, $\pi_{x}$ and $\Pi_{W}$ are given
by
\[
\Sigma=\left(\begin{array}{ccccc}
1 & 0.1 & 0.3 & 0.2 & 0.8\\
0.1 & 1 & 0.3 & 0.2 & 0.1\\
0.3 & 0.3 & 1 & 0.3 & 0.2\\
0.2 & 0.2 & 0.3 & 1 & 0.3\\
0.8 & 0.1 & 0.2 & 0.3 & 1
\end{array}\right);\pi_{x}=\left(4/\sqrt{kn}\right)\left(\begin{array}{c}
1\\
1\\
1\\
-1\\
1\\
1\\
1
\end{array}\right);
\]
\[
\Pi_{W}=A\left(\begin{array}{ccc}
\sqrt{\kappa_{1}} & 0 & 0\\
0 & \sqrt{\kappa_{2}} & 0\\
0 & 0 & \sqrt{\kappa_{3}}
\end{array}\right)\Sigma_{V_{W}V_{W}.\varepsilon}^{1/2},
\]
where
\[
A=\frac{1}{\sqrt{n}}\left(\begin{array}{ccc}
\frac{1}{\sqrt{3}} & 0 & 0\\
\frac{1}{\sqrt{3}} & 0 & 0\\
\frac{1}{\sqrt{3}} & 0 & 0\\
0 & \frac{1}{\sqrt{2}} & 0\\
0 & \frac{1}{\sqrt{2}} & 0\\
0 & 0 & \frac{1}{\sqrt{2}}\\
0 & 0 & \frac{1}{\sqrt{2}}
\end{array}\right),
\]
so that
\[
\mathbb{E}_{Z}\left(\Theta_{W}^{\prime}\Theta_{W}\right)=n\Sigma_{V_{W}V_{W}.\varepsilon}^{-1/2}\Pi_{W}^{\prime}\Pi_{W}\Sigma_{V_{W}V_{W}.\varepsilon}^{-1/2}=\left(\begin{array}{ccc}
\kappa_{1} & 0 & 0\\
0 & \kappa_{2} & 0\\
0 & 0 & \kappa_{3}
\end{array}\right).
\]

\subsection{Conditional Heteroskedasticity with Kronecker Product Structure}
\label{sec:Het}

\cite*{Guggenberger_Kleibergen_Mavroeidis_2024} (GKM24) showed that their conditional critical value function, conditioning on the largest eigenvalue of a matrix defined below, also applies to a subvector $AR$ test statistic that is robust to a Kronecker product form of conditional heteroskedasticity. From their results, it follows again that our approach, conditioning on the second-smallest eigenvalue of this matrix also maintains size control and has strictly higher power.

As in Section \ref{sec:Sims}, GKM24 assumes that the random vectors $\left(\varepsilon_i,Z_i',V_{Xi}',V_{Wi}'\right)$ for $i=1,\ldots,n$ are i.i.d.\ with distribution $F$. Again, let $U_i=\left(\varepsilon_i+V_{Wi}'\gamma,V_{Wi}'\right)'$. Then GKM24 considers the parameter space $\mathcal{F}_{AKP,a_n}$ for $\left(\gamma,\Pi_W,\Pi_X,F\right)$ under the null hypothesis $H_0:\beta=\beta_0$ that is larger than the homoskedastic $\mathcal{F}$ in (\ref{eq:parsF}), imposing an approximate Kronecker product (AKP) structure for the covariance matrix, given by
\begin{align}
\label{eq:F_kron}
    \mathcal{F}_{AKP,a_n} &=\{(\gamma, \Pi_W, \Pi_X, F): \gamma\in \mathbb{R}^{m_W}, \Pi_W\in\mathbb{R}^{k \times m_W}, \Pi_X \in \mathbb{R}^{k\times m_X}, \notag  \\
    &\mathbb{E}_F[||T_i||^{2+\delta_1}]\leq B, \text{ for } T_i \in \{\text{vec}(Z_iU_i'), ||Z_i||^2\}, \notag \\
    &\mathbb{E}_F[Z_iV_i']=0^{k \times (m+1)}, \mathbb{E}_F\left(\text{vec}(Z_iU_i')(\text{vec}(Z_iU_i'))'\right)=G_F\otimes H_F+\Upsilon_n, \notag \\
    &\kappa_{min}(A)\geq\delta_2 \text{ for } A \in \{ \mathbb{E}_F[Z_i'Z_i], G_F, H_F \}\} 
\end{align}
for symmetric matrices $\Upsilon_n\in\mathbb{R}^{kp\times kp}$ such that $||\Upsilon_n||\leq a_n$, positive definite symmetric matrices $G_F \in \mathbb{R}^{p\times p}$ (whose upper left element is normalized to 1) and $H_F\in \mathbb{R}^{k\times k}, \delta_1, \delta_2>0, B< \infty$.

Let $\hat{R}_n \coloneqq n^{-1}\sum^n_{i=1}f_if_i'$, where $f_i:=((M_Zy_0)_i, (M_ZW)_i')'$, then GKM24 specify estimators $\hat{G}_n$ and $\hat{H}_n$ for $G_F$ and $H_F$ respectively, given by
\begin{equation*}
\label{eq:minimizers_text}
    (\hat{G}_n, \hat{H}_n)=\arg\min||G \otimes H - \hat{R}_n||,
\end{equation*}
where the minimum is taken over ($G,H$) for $G \in \mathbb{R}^{p \times p}, H\in \mathbb{R}^{k\times k}$ being positive definite, symmetric matrices, and normalized such that the upper left element of $G$ equals 1. GKM24 (pp 963-964) provide details how to obtain these estimators by means of a singular value decomposition.

The feasible subvector $AR$ statistic when the variance matrix has an approximate Kronecker structure is then defined as\footnote{For this $AR$ statistic to be invariant to nonsingular transformations of the instruments matrix, the instruments need to be standardised such that $Z'Z/n=I_p$ before calculating the test statistic.}

\begin{align*}
AR_{AKP,n}\left(\beta_{0}\right) & \coloneqq \min_{\tilde{\gamma}\in\mathbb{R}^{m_{W}}} n^{-1}\frac{\left(y_{0}-W\tilde{\gamma}\right)^{\prime}Z\hat{H}_n^{-1}Z'\left(y_{0}-W\tilde{\gamma}\right)}{\left(1,-\tilde{\gamma}^{\prime}\right)\hat{G}_n\left(1,-\tilde{\gamma}^{\prime}\right)^{\prime}}\\
 & =\min\,\text{eval}\left(n^{-1}\left(y_{0},W\right)^{\prime}Z\hat{H}_n^{-1}Z'\left(y_{0},W\right)\hat{G}_n^{-1}\right).
\end{align*}

Denote the $p$ eigenvalues of the matrix $n^{-1}\left(y_{0},W\right)^{\prime}Z\hat{H}_n^{-1}Z'\left(y_{0},W\right)\hat{G}_n^{-1}$
by $\hat{\kappa}_{1n}\geq\hat{\kappa}_{2n}\geq\ldots\geq\hat{\kappa}_{pn}$,
so that $AR_{AKP,n}\left(\beta_{0}\right)=\hat{\kappa}_{pn}$. GKM24 (Theorem 1, p 965) show that for the parameter space $\mathcal{F}_{AKP,a_n}$ defined in (\ref{eq:F_kron}) the conditional subvector $AR_{AKP}$ test rejects $H_0$ at nominal size $\alpha$ asymptotically if
$$
AR_{AKP,n}\left(\beta_0\right)>c_{1-\alpha}\left(\hat\kappa_{1n},k-m_W\right),
$$
where $c_{1-\alpha}(.,.)$ is the same conditional critical value function as in Sections 2 and 3.

As GKM24 (p 994) state, the same proof as for Theorem 5 in GKM applies to Theorem 1 in GKM24. As discussed in Section \ref{sec:Sims}, the limiting null rejection probabilities are therefore equal to the finite sample ones as derived in Section \ref{sec:finite_sample_analysis}. Because of this convergence, it follows from the results in Section \ref{sec:finite_sample_analysis} that the conditional subvector $AR_{AKP}$ test also rejects $H_0$ at nominal size $\alpha$ asymptotically if we condition on the second-smallest eigenvalue, that is if
\[
AR_{AKP,n}\left(\beta_0\right)>c_{1-\alpha}\left(\hat\kappa_{(p-1)n},k-m_W\right).
\]
We highlight in the next subsection how to adjust the proof of Theorem 1 in GKM24 to obtain the result for conditioning on the second-smallest eigenvalue.

\subsubsection{Main Deviation from the GKM24 Proof}
\label{subsec:deviation_proof_gkm24}

The proof directly follows from GKM24, hence we just give the main deviation from their proof. This deviation primarily relates to how their Proposition 5 is used. 
For reference, we state the proposition here. We use the same notation as in GKM24 but do not introduce definitions here. The reader is referred to GKM24 for the full set of details.
\setcounter{prop}{4}
    \begin{prop}
    \label{prop:3_appendix} in GKM24.
        Under all sequences $\{ \lambda_{n,h}: n\geq 1 \} $ with $\lambda_{n,h} \in \Lambda_n$:
        \begin{enumerate}[label=(\alph*)]
            \item $\hat{\kappa}_{jn}\stackrel{p}{\to}\infty$ for all $j \leq q$.
            \item The (ordered) vector of the smallest $p-q$ eigenvalues of $n\hat{U}_n'\hat{D}_n'\hat{Q}_n'\hat{Q}_n\hat{D}_n\hat{U}_n$, i.e., $(\hat{\kappa}_{(q+1)n},...,\hat{\kappa}_{pn})',$ converges in distribution to the (ordered) $p-q$ vector of eigenvalues of $\bar{\Delta}_{h,p-q}h_{3,k-q}h_{3,k-q}'\bar{\Delta}_{h,p-q}\in \mathbb{R}^{(p-q)\times (p-q)}$.
            \item The convergence in parts (a) and (b) holds jointly with the convergence in Lemma 2 in GKM24.
            \item Under all subsequences $\{w_n\}$ and all sequences $\{ \lambda_{w_n,h} : n \geq 1 \}$ with $\lambda_{w_n,h}\in \Delta_{n},$ the results in parts (a)-(c) hold with n replaced with $w_n$. 
        \end{enumerate}
    \end{prop}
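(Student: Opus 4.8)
The plan is to transport the strong/weak block decomposition behind the proof of Proposition~\ref{Prop infty} to the heteroskedastic AKP setting, with the exact finite-sample distribution theory there replaced by the weak-convergence machinery of GKM24. The starting input is Lemma~2 in GKM24, which under any drifting sequence $\{\lambda_{n,h}:n\geq1\}$ with $\lambda_{n,h}\in\Lambda_{n}$ delivers the joint weak convergence of the whitened sample moment $\hat{Q}_{n}$ together with the consistency of the Kronecker estimators $\hat{G}_{n}$ and $\hat{H}_{n}$ for $G_{F}$ and $H_{F}$, the vanishing AKP error $\Vert\Upsilon_{n}\Vert\leq a_{n}\to0$ being absorbed at this stage. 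First I would rotate coordinates according to the singular-value structure of the localized first-stage parameter encoded by $h$: the $q$ directions whose localized concentration diverges form the strong block, and the remaining $p-q$ directions form the weak block, along which the score limit is the Gaussian score array $h_{3,k-q}$, a $(p-q)\times(k-q)$ object, scaled by the deterministic $\bar{\Delta}_{h,p-q}$. The transformations $\hat{D}_{n}$ and $\hat{U}_{n}$ are exactly the data-dependent orthogonal maps that implement this separation and that asymptotically block-diagonalize $n\hat{U}_{n}'\hat{D}_{n}'\hat{Q}_{n}'\hat{Q}_{n}\hat{D}_{n}\hat{U}_{n}$, playing the role of the transformations $O$ and $\tilde{D}$ in the proof of Proposition~\ref{Prop infty}.

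For part~(a) I would apply Cauchy interlacing to this rotated matrix. Its strong--strong diagonal block is a principal submatrix whose $q$ eigenvalues diverge in probability, because $\lambda_{n,h}$ places diverging concentration in precisely those directions; interlacing then forces $\hat{\kappa}_{jn}\stackrel{p}{\to}\infty$ for every $j\leq q$, while the weak block remains stochastically bounded.

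Part~(b) is the core. Mirroring the proof of Proposition~\ref{Prop infty}, I would show that after the orthogonal reduction the off-diagonal strong--weak block converges in probability to zero, because, just as $\tilde{C}_{12}$ there vanishes through the factor $\Xi_{22}^{-1}\stackrel{p}{\to}0$, here the analogous cross term carries an inverse of the diverging strong block and is therefore negligible. The weak diagonal block is the Gram matrix of the weak columns of $\hat{Q}_{n}\hat{D}_{n}\hat{U}_{n}$, which by Lemma~2 and the continuous mapping theorem converges in distribution to $\bar{\Delta}_{h,p-q}h_{3,k-q}h_{3,k-q}'\bar{\Delta}_{h,p-q}$, a Wishart-type matrix with the reduced $k-q$ degrees of freedom that reflect the $q$ absorbed strong directions. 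Continuity of the ordered-eigenvalue map, together with the vanishing off-diagonal block (equivalently, a Schur-complement argument in which the strong block's inverse kills the correction term), then yields the stated convergence in distribution of $(\hat{\kappa}_{(q+1)n},\ldots,\hat{\kappa}_{pn})'$ to the ordered eigenvalues of that limit.

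Parts~(c) and~(d) follow with little extra work. For~(c), every limit above was obtained as a continuous transformation of the single joint weak limit supplied by Lemma~2, so the continuous mapping theorem delivers the convergences in~(a) and~(b) jointly with it. For~(d), the argument used only features of $\lambda_{n,h}$ that are inherited along any subsequence $\{w_{n}\}$ with the corresponding parameters in $\Delta_{n}$, so replacing $n$ by $w_{n}$ throughout leaves each step intact; this is the routine subsequencing device of the uniformity framework. The main obstacle is the control of the data-dependent transformations in part~(b): one must establish, jointly with the Gaussian limit of $\hat{Q}_{n}$, that the estimated $\hat{G}_{n}$ and $\hat{H}_{n}$ and the rotations they induce converge to the correct deterministic limits, and that the strong--weak cross term is negligible against the diverging strong eigenvalues. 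It is precisely here that the AKP approximation bound $\Vert\Upsilon_{n}\Vert\leq a_{n}$ and the consistency rates of $\hat{G}_{n}$ and $\hat{H}_{n}$ must be brought to bear.
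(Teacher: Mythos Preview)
The paper does not prove this proposition at all: it is Proposition~5 \emph{of} GKM24, quoted verbatim ``for reference'' with the explicit disclaimer that the notation and definitions are not introduced and that ``the reader is referred to GKM24 for the full set of details.'' There is therefore no proof in the present paper to compare your proposal against; the proposition is an external input, invoked as a black box so that the authors can show how their conditioning-on-$\hat{\kappa}_{p-1}$ argument slots into the GKM24 size proof (Section~\ref{subsec:deviation_proof_gkm24}).

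Your sketch is a plausible outline of how such a result is typically obtained---rotate to separate strong and weak blocks, use interlacing for the diverging eigenvalues, kill the cross block via the inverse of the strong block, and apply the continuous mapping theorem to the weak Gram matrix---and the analogy you draw with the $O$ and $\tilde{D}$ transformations in the proof of Proposition~\ref{Prop infty} is apt in spirit. But be aware that you are reconstructing the proof of a result that belongs to a different paper; whether GKM24's actual argument matches your outline (in particular, how $\hat{D}_{n}$ and $\hat{U}_{n}$ are defined and how the $a_{n}$ bound on $\Upsilon_{n}$ enters) cannot be adjudicated from the present manuscript. If your goal was to supply the missing proof here, that is unnecessary: the authors deliberately defer to GKM24 and only need the \emph{statement} of Proposition~5 to complete Section~\ref{subsec:deviation_proof_gkm24}.
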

    
    More specifically, we now only have to look into two different cases for $q$. Namely, the "strong IV" case, when $q=p-1=m_W$, and when $0\leq q <m_W$.

    By construction, for $\alpha\in(0,1), c_{1-\alpha}(z, k-m_W)$ is an increasing continuous function in $z$ on $(0, \infty)$, where $c_{1-\alpha}(z,k-m_W)$ is first defined Section \ref{sec:CondLarge} with conditioning variable $z$. Furthermore, $c_{1-\alpha}(z,k-m_W) \to \chi^2_{k-m_W,1-\alpha}$ as $z \to \infty$. Thus, defining $c_{1-\alpha}(\infty, k-m_W):=\chi^2_{k-m_W, 1-\alpha}$, we can view $c_{1-\alpha}(z,k-m_W)$ as a continuous function in $z$ on $(0,\infty]$. Finally, for $\alpha\in(0,1)$ we have $\mathbb{P}[\hat{\kappa}_p = c_{1-\alpha}(\hat{\kappa}_{p-1}, k-m_W)]=0$ whenever $\hat{\kappa}_{p}$ and $\hat{\kappa}_{p-1}$ are the smallest and second-smallest eigenvalues of the Wishart matrix $\Xi'\Xi \sim \mathcal{W}_p(k, I_p, \mathcal{M}'\mathcal{M})$ and any choice of eigenvalues $(\kappa_1,...,\kappa_{p-1},0)$ of $\mathcal{M}'\mathcal{M} \in \mathbb{R}^{p \times p} $.

    If we take Proposition \ref{prop:3_appendix} with $q=p-1=m_W$ (the "strong IV" case), we obtain $\bar{\Delta}_{h,p-q}'h_{3,k-q}h_{3,k-q}'\bar{\Delta}_{h,p-q}\sim \chi^2_{k-m_W}$, and thus by part (b) of Proposition \ref{prop:3_appendix}, the limiting distribution of the subvector $AR^{p-1}_{AKP,n}$ test statistic is $\chi^2_{k-m_W}$ in that case, while all the larger roots of the matrix $n^{-1}\left(y_{0},W\right)^{\prime}Z\hat{H}_n^{-1}Z'\left(y_{0},W\right)\hat{G}_n^{-1}$ converge in probability to infinity by part (a). In particular, $AR_n(\beta_0)\stackrel{d}{\to}\chi^2_{k-m_W}$ under $\{ \lambda_{n,h}\in \Lambda: n \geq1 \}$ while the second-smallest root $\hat{\kappa}_{(p-1)n}$ goes to infinity in probability. Thus by the definition of convergence in distribution and the features of $c_{1-\alpha}(z, k-m_W)$ described above 
    \begin{align}
        RP_n(\lambda_{n,h})&=\mathbb{P}_{F_n}[AR_n(\beta_0)>c_{1-\alpha}(\hat{\kappa}_{(p-1)n},k-m_W)] \notag\\
        &\to RP^+(h)=\mathbb{P}[\chi^2_{k-m_W}>\chi^2_{k-m_W,1-\alpha}]=\alpha.
    \end{align}

    When $0\leq q <m_W$, by Proposition \ref{prop:3_appendix}(b), the limiting distribution of the two roots $(\hat{\kappa}_{(p-1)n}, AR_n(\beta_0))$ equals the distribution of the second-smallest and smallest eigenvalues, $\kappa(p-q-1)$ and $\kappa(p-q)$ say, of $\bar{\Delta}_{h,p-q}'h_{3,k-q}h_{3,k-q}'\bar{\Delta}_{h,p-q} \in \mathbb{R}^{(p-q)\times (p-q)}$, where $h_{3,k-q}'\bar{\Delta}_{h,p-q}=(\tilde{w}_1,...,\tilde{w}_{p-q})$, where $\tilde{w}_j\in \mathbb{R}^{k-q}$ for $j=1,...,p-q$ are independent $N(m_j,I_{k-q})$ with $m_j=({0^{j-1}}', h_{1,q+j},{0^{k-q-j}}')'$ for $j<p-q$ and $m_{p-q}=0^{k-q}$, respectively. Consider a finite-sample scenario as in Section \ref{sec:finite_sample_analysis} with the roles of $\Xi$ and $\mathcal{M}$ played by $h_{3,k-q}'\bar{\Delta}_{h,p-q}$ and $(m_1,...,m_{p-q})$, respectively. From Proposition \ref{Prop infty} and Result \ref{ResSize}, we know that $\mathbb{P}[\kappa(p-q)>c_{1-\alpha}(\kappa(p-q-1),k-m_W)]\leq \alpha$. Therefore,
    \begin{align}
        RP_n(\lambda_{n,h})&=\mathbb{P}_{F_n}[AR_n(\beta_0)>c_{1-\alpha}(\hat{\kappa}_{(p-1)n},k-m_W)] \notag \\
        &\to RP^+(h)= \mathbb{P}[\kappa(p-q)>c_{1-\alpha}(\kappa(p-q-1),k-m_W)]\leq \alpha, 
    \end{align}
    where the convergence holds from the features of $c_{1-\alpha}(z,k-m_W)$ described above.

\vspace{10pt}

\noindent\textbf{Acknowledgements}

\noindent We would like to thank Sophocles Mavroeidis for very helpful comments and suggestions, and Daniel Gagliardi and Jakub Grohmann for their research assistance early on in the project.

\bibliographystyle{ecta}
\bibliography{SubAR}

\end{document}